\newtheorem {definition}{Definition}
\newtheorem {lemma}{Lemma}
\newtheorem {theorem}{Theorem}
\newtheorem {proposition}{Proposition}
\newtheorem {example}{Example}
\date{}
\begin{document}

\title{Random Access Game in Fading Channels with Capture: Equilibria and Braess-like Paradoxes}

\author{Fu-Te Hsu,~\IEEEmembership{Student Member,~IEEE,} and Hsuan-Jung Su*,~\IEEEmembership{Member,~IEEE}
%\thanks{This work was
%supported by the National Science Council, Taiwan, R.O.C., under
%grant NSC XX-XXXX-X-XXX-XXX.}
\thanks{The authors are with the Department of Electrical Engineering
and Graduate Institute of Communication Engineering, National Taiwan University, Taipei, Taiwan, 10617
                  (email: f96942059@ntu.edu.tw, hjsu@cc.ee.ntu.edu.tw)}}

\maketitle

\begin{abstract}
The Nash equilibrium point of the transmission probabilities in a slotted ALOHA system with selfish nodes is analyzed. The system consists of a finite number of heterogeneous nodes, each trying to minimize its average transmission probability (or power investment) selfishly while meeting its average throughput demand over the shared wireless channel to a common base station (BS). We use a game-theoretic approach to analyze the network under two reception models: one is called power capture, the other is called signal to interference plus noise ratio (SINR) capture.
%Contrary to one's intuition,
It is shown that, in some situations, Braess-like paradoxes may occur.
That is, the performance of the system may become worse instead of better when channel state information (CSI) is available at the selfish nodes.
%In particular, a condition for homogeneous nodes under which Braess-like paradoxes occur is analytically presented.
In particular, for homogeneous nodes, we analytically presented that Braess-like paradoxes occur in the power capture model, and
in the SINR capture model with the capture ratio larger than one and the noise to signal ratio sufficiently small.

%In addition, some distributed algorithms converging to the Nash equilibrium in the optimum operating region are presented.
\end{abstract}

\vspace{-4mm}
\begin{center}
   {\underline{\bf \small EDICS}} \hspace{3mm} {\small SPC-BBND, ~SPC-PERF}
\end{center}
\vspace{-6mm}
\begin{IEEEkeywords}
Game theory, Nash equilibrium, Braess paradox, random access, slotted ALOHA.  % power capture, SINR capture.
\end{IEEEkeywords}

\section{Introduction}\label{Section_Introduction}

The simplicity of ALOHA \cite{Aloha70} and slotted ALOHA \cite{SlottedAloha72} systems proposed in the 1970s for random access have attracted a large amount of research.
%Classical analyses of slotted ALOHA from the queueing perspective have focused on the stability region. For example, \cite{StabilityMPR88} studied the stability region in channels with multipacket reception (MPR) capability, an infinite number of users, and the single-buffer (also known as without buffer) assumption. The stability region of this problem is now well understood (see \cite{DataNetwork92}). However, the analysis of the stability region of slotted ALOHA with a finite number of users and infinite buffer is quite complicated due to the interaction between multiple queues.
%Tsybakov and Mikhailov \cite{StabilityAloha79} initiated the stability analysis. Sant and Sharma \cite{Sant_Sharma00} analyzed the stability in capture channels which is a special case of the MPR model. Recently Naware, Mergen and Tong \cite{MPR05} considered the MPR model, and derived an inner bound of the stability region. Complete characterization of the closure of stability region is not available even until this time. It is conjectured that the achievable region obtained by backlogged queues coincides with the closure of stability region by possibly empty queues (see \cite{Luo&Ephremides_conjecture_06} and the references therein).
From the system perspective, the earlier works focused on the issues of average throughput and stability of ALOHA systems with homogeneous users. That is, these works usually assumed that all users in the network have the same statistical characteristics, thus only considered the macroscopic (or network-wide average) performance. The readers are referred to \cite{Zorzi_95} and the references therein for the analyses of slotted ALOHA in fading channels with capture.

Recently, MacKenzie and Wicker \cite{Aloha_Selfish01} first considered slotted ALOHA from the user perspective. They assumed that there is no centralized scheduling, and each user acts selfishly to maximize its own utility function. They then analyzed the Nash equilibrium point by game theory \cite{GameTheory94}.
%The selfish behavior is inevitable since
%there is no way to ensure that the same mandated algorithm will be implemented by all users in a distributive environment.
%without a centralized control users may cheat the system in order to improve their own performance, and
Game theory is a useful tool for modeling and studying the interaction of strategic interactions among selfish players.
The most important equilibrium concept in game theory is the Nash equilibrium (named after its inventor John Nash) at which there is no incentive for any player to unilaterally deviate.
%To this end,
The selfish ALOHA system is more robust and scalable for implementation than a system with centralized control.

The analysis of \cite{Aloha_Selfish01} was extended in \cite{Aloha_Selfish08} to the case with heterogenous users whose costs of transmission are not identical.
The behavior of the network throughput at the Nash equilibria as a function of the costs was analyzed. Unfortunately, the cost is not a parameter one can easily control. It is determined by the relative cost of transmission as compared to the value of a success.

Incorporating the availability of CSI into an ALOHA system, \cite{Aloha_JSAC08} investigated the Nash equilibrium points of CSI-dependent transmission probabilities for heterogenous nodes in time-varying channels. In that channel-aware ALOHA system, each node tries to minimize its average transmission probability selfishly while meeting its average throughput demand to the common base station (BS). It was shown that the feasible region of the nodes' throughput demands in the selfish ALOHA is equivalent to the achievable region by a system with centralized control. Moreover, within the feasible region, exactly two Nash equilibrium points exist.
%They also proposed an inner bound on the feasible region and some properties of Nash equilibrium.
%Besides, a distributed algorithm converging to the Nash equilibrium point was provided.
This work considerably extended the analysis of the network model by Jin and Kesidis \cite{Jin&Kesidis02}.

Other related works on CSI-dependent transmission probabilities in channel-aware ALOHA networks include \cite{Aloha_INFOCOM08}\cite{Sabir09}\cite{NE_Threshold_TSP07}. In \cite{Aloha_INFOCOM08}, the analysis of \cite{Aloha_JSAC08} under the collision model was extended to the capture model in static channels. It was shown that while multiple Nash equilibria may exist, one of them is uniformly preferable in the sense of minimum transmission probability.
In \cite{Sabir09}, the authors considered the network model in \cite{Aloha_JSAC08} without the backlogged assumption, and found that, different from \cite{Aloha_JSAC08}, infinitely many equilibrium points may exist with the distributed algorithm if the slotted ALOHA is stable.
A slotted ALOHA system with general multipacket reception (MPR) \cite{StabilityMPR88}\cite{MPR05} was considered in \cite{NE_Threshold_TSP07}. It was shown that, for selfish nodes to maximize their individual utilities (including transmission and waiting costs), the structure of CSI-dependent transmission probabilities is a threshold strategy.

The availability of CSI was used to vary transmission power rather than transmission probability in \cite{Sun&Modiano05}. The authors of \cite{Sun&Modiano05} derived an explicit CSI-dependent power allocation strategy at the Nash equilibrium point in a slotted ALOHA system with capture, under the assumption that channel gain is uniformly distributed. It was shown that as the number of nodes increases, the system performance with the power allocation strategy at the symmetric Nash equilibrium point approaches that with the optimal mandated power allocation strategy.

In \cite{Aloha_SIGMETRICS08}, not only CSI-dependent transmission probabilities but also CSI-dependent transmission power levels were considered.
%in the network model as compared to \cite{Aloha_JSAC08}.
%The authors analyzed the Nash equilibrium points in a quite realistic framework with CSI, power levels, and transmission rate.
Through numerical study, it was shown that the additional allowance of CSI-dependent transmission power levels may make the system performance worse instead of better as compared to that with only the CSI-dependent transmission probabilities \cite{Aloha_JSAC08}.
This kind of counterintuitive phenomena, which demonstrate a performance degradation when more information or resource is added to a noncooperative network, is known as the \emph{Braess paradox}, introduced by Braess in transportation network planning \cite{Braess_ger}.
%The Braess-like paradox does not necessarily happen in a noncooperative network.
%The Braess paradox \cite{Braess_ger} is basically due to the fact that the equilibrium achieved by the selfish strategies of users is not the optimum, despite the addition of roads to the transportation network which should never make the optimal performance (traffic delay) worse. Thus the optimal performance may only be achievable through centralized control.
There are other Braess-like paradoxes discovered in different contexts, for example, in the contexts of queueing network \cite{Cohen_Jeffries97}, computer network \cite{Braessparadox_computernetwork99}\cite{Braessparadox_computernetwork00}\cite{Braessparadox_computernetwork02}\cite{Braessparadox_computernetwork06}, and wireless communication \cite{Altman_Braess08}.

The game-theoretic approach has widely been applied in communication networks.
%recently, and discovered many implications.
For example,
MacKenzie and Wicker \cite{GameTheory_and_the_Design} showed that game theory can be applied to developing self-configuring wireless networks.
Cui, Chen and Low \cite{Distributed_Update_JSAC08} showed a game-theoretic framework for contention-based medium access control.
Lee {\it et al.} \cite{Reverse_engineering_MAC07} revealed the noncooperative nature of random access from MAC reverse-engineering. They discovered that, in the current backoff-based MAC protocol, the users are participating implicitly in a noncooperative game.

In this paper, we also consider the network model in \cite{Aloha_JSAC08}.
Our work analyzes the Nash equilibrium point of transmission probabilities in fading channels under two more general capture models both including the collision model \cite{Aloha_JSAC08} as a special case. We also extend the analysis of \cite{Aloha_INFOCOM08} for static channels to fading channels.
The main contribution of our work that differentiates it from the earlier works studying the CSI-dependent transmission probability in selfish ALOHA \cite{Aloha_JSAC08}\cite{Aloha_INFOCOM08}\cite{NE_Threshold_TSP07} is that we find that Braess-like paradoxes may occur under some situations.
In other words, in some situations, the availability of CSI may degrade the performance (in terms of, e.g., total power consumption, throughput, etc.).
%in a selfish ALOHA system with capture to allow the selfish nodes to employ more flexible strategies.
%Note that in communication systems, the additional availability of CSI usually increases the capacity. At least, it should never make the optimal performance worse. Yet the equilibrium achieved by the \emph{best response} strategies of the selfish nodes is not only not optimal, but also worse (in terms of performance) than that achieved when CSI is not available.
We call this phenomenon a \emph{Braess-like paradox} due to its analogy to the Braess paradox.
To the best of our knowledge, our work is the first to show a Braess-like paradox \emph{analytically} in a random access network, and this paradox does not occur in the collision model considered in \cite{Aloha_JSAC08}.
Such a discovery is important because it was generally believed that the additional availability of CSI should improve, at least not degrade, the network performance.
%This belief is shown to be false recently, and
%Several examples of Braess-like paradoxes in different contexts have also been discovered to avoid wasting resource. In this paper, we present a mathematically rigorous study in the context of random access networks.

% or the ALOHA model considered in \cite{Adireddy_Tong05}.
%For the case with homogeneous nodes, we find that our expression of system throughput
%coincides with that obtained from queueing analysis with our transmission probability replaced by the effective transmission probability in \cite{Dua_TWC08}.
%This result accords with the conjecture in \cite{Luo&Ephremides_conjecture_06} that the achievable region is equivalent to the stability region. %Hence, our work provides a first step toward the understanding that the Braess-like paradoxes may occur even without the backlogged assumption, for which analytical study is quite complicated.

The two capture models we consider in this paper are similar to the capture model in \cite{Gupta_Kumar00} for static channels. Specifically, one model is called the signal to interference plus noise ratio (SINR) capture model in which the BS receives the packet of a node successfully if the node's SINR is larger than the \emph{capture ratio} $b$.
When $b<1$, it is possible for the BS to successfully receive more than one node's packets as in, for example, CDMA systems whose signal quality can be highly increased after despreading. When $b>1$, at most one node's packet is successfully received as in typical narrowband systems which need the SINR to be high enough to operate properly.
The other model is called the power capture model in which the BS receives the packet of a node successfully if the node has the strongest received power which is at least $1+\Delta$ times stronger than the received power of every other node, where $\Delta\ge 0$ models a guard zone to counter interference.

%The feasible regions of users' throughput demands under the two capture models are analyzed, and
%We also characterize the Nash equilibrium
%reveals that there exist at most two Nash equilibrium points for any achievable throughput demands.
Our work also reveals that when CSI is not available to selfish nodes, any achievable throughput demands in the SINR capture model can be achieved by a Nash equilibrium point with its sum of the transmission probabilities of different nodes no larger than a constant which depends only on the capture ratio. As for
%We call the set of these Nash equilibrium points the \emph{optimum operating region}.
the power capture model, our analysis shows that when CSI is not available to selfish nodes and $\Delta=0$ or
when perfect CSI is available to selfish nodes and $\Delta\le\frac{1}{n-1}$, where $n$ is the number of nodes in the network, there exists a unique Nash equilibrium point for any achievable throughput demands.
When $\Delta=\infty$, the number of Nash equilibrium points becomes exactly two.
%The feasible region shrinks as $\Delta$ increases, and we can explicitly characterize the feasible region for the special case when $\Delta=0$, which is known as perfect power capture.

%In particular, SINR capture with the threshold ratio larger than one, the throughput obtained by nodes with perfect CSI is less as compared to the case where nodes have no CSI, under the condition of same transmission probability (equivalently, transmission power consumption).

%\item We provide two fully distributed mechanisms that converge to the Nash equilibrium point.

The remainder of the paper is organized as follows. We formulate our random access game in Section \ref{Section_Problem Formulation}. The analysis of the Nash equilibrium points and the discussion of Braess-like paradoxes are given
under the SINR capture model in Section \ref{Section_SINR_Capture_Model} and under the power capture model in Section \ref{Section_Power_Capture_Model}.
In Section \ref{Section_Distributed_Algorithm}, we provide one distributed mechanism which can make the system converge to the Nash equilibrium.
The paper is then concluded in Section \ref{Section_Conclusion}.

\section{Problem Formulation}\label{Section_Problem Formulation}
We consider a wireless network where $n$ nodes transmit at the same fixed power level $P_T$ to a BS over a shared channel. Time is slotted, and each transmission attempt sends a packet which occupies one time slot. The amount of information contained in a packet is fixed and is the same for all nodes. Thus, for the brevity of analysis, the throughput is defined in terms of the average number of successfully received packets (which is commonly used under the capture or MPR model, e.g., \cite{Aloha_Selfish03}). %\cite{Aloha_Selfish03}\cite{StabilityMPR88}).
All nodes are synchronized so that each transmission attempt starts at the slot boundary. %Whether or not a node will transmit in a time slot is governed by its transmission probability.

At time slot $k$, the signal $y_{k}$ received by the BS is given by
\begin{align}\label{RA_channel}
y_{k}=\sum^n_{i=1}h_{i,k}B_{i,k}d_{i,k}+\eta_k
\end{align}
where
\begin{align*}
B_{i,k}=
\begin{cases}
   1& \text{if node $i$ transmits in time slot $k$}\\
   0& \text{otherwise}
\end{cases},
\end{align*}
$h_{i,k}$ is the channel gain between node $i$ and the BS, $d_{i,k}$ is the signal from node $i$ (with transmission power $P_T$), and $\eta_k$ is the additive noise at the BS.

The channel gains $h_{i,k}$ are assumed to be independent and identically distributed (i.i.d.) among all nodes, fixed within a time slot, and varying from time slot to time slot as in \cite{Aloha_JSAC08}.\footnote{The assumption of independence among the channels of different nodes is justified if the nodes are located far apart, or when the channels have many scatterers surrounding the nodes (e.g., in urban areas).  In addition, given that the nodes communicating with the same BS are usually in similar environments, their channels have similar characteristics. If we further assume that there is open-loop power control to counter the long term average of the channel effects (so the nodes can have fair competition with one another), the assumption of identical channel distributions is also justified. In that case, the transmission power will be different for different nodes. However, this will not affect the basic assumption of the system model, which is for every node to individually maximize its utility, and the essence of the analysis.}
For clarity of the analysis, the channel gain of any particular node is assumed to be an i.i.d. process with respect to time. Note that even if the channel gain of a node is merely assumed to be a stationary and ergodic process, the results of this paper derived based on \emph{stationary strategies} can be shown to hold. The readers are referred to \cite[Section II]{Aloha_JSAC08} for discussion on this assumption.\footnote{The analysis in this paper focuses on the average performance (in a time slot). When CSI is not available, given the assumption of independence among the channels of the nodes, the correlation of individual channels with respect to time does not affect the average performance. When CSI is available, since we assume that the CSI is instantaneous for each time slot, and proportional to the absolute channel gain (as will be discussed later), the inaccuracy of the CSI can be attributed mainly to the quantization error but not the estimation error due to the time variation of the channel. Thus the time correlation again does not affect the average performance.}
%\footnote{For example, the channel gains for each node can be constant over several time slots (called one block), and vary independent from block to block.}
It is further assumed that, at the beginning of time slot $k$, node $i$ may be able to obtain its own instantaneous CSI $z_{i,k}$, which provides an indication of the quality of the current channel between that node and the BS.
For a TDD system, the CSI may be measured by individual nodes based on the signal from the BS (e.g., a periodic pilot signal) and the channel reciprocity.
In an FDD system, the knowledge of CSI may be obtained via feedback from the BS.
Our model only assumes that some sort of instantaneous CSI is available and does not restrict how the CSI is obtained. In practice, instantaneous CSI may be difficult to derive from the feedback from the BS due to the busty nature of random access which may introduce a random time lag between the time the BS measures the CSI and the time of transmission of a node. Thus our system model is more applicable to TDD systems.
Assume the number of possible values of $z_{i,k}$ is $x_i$, and $z_{i,k}$ belongs to the set $\{z_{i1},z_{i2},\dots,z_{ix_i}\}$ with $z_{i1}<z_{i2}<\cdots<z_{ix_i}$. Here the assumption of finite discrete (quantized) CSI is taken for convenience only, and can be relaxed. For the analysis in this paper to hold, the only required property of the mapping from the channel gain to the CSI is that a larger value of CSI corresponds to a higher range of absolute channel gain $|h_{i,k}|$.
%\footnote{For slotted ALOHA systems, due to random transmissions of the nodes, the set of users simultaneously transmitting in one time slot is different from that in the next time slot. Thus, in every time slot, the received SINRs of the nodes remain unknown until the transmissions actually happen. It is therefore impractical to assume that the CSI corresponds to the received SINR.}
By excluding the CSI values observed with probability zero, we can assume that the probability of observing each element in the set $\{z_{i1},z_{i2},\dots,z_{ix_i}\}$ is larger than zero.
Note that the case where CSI is not available can be seen as a special case with only one possible CSI value.
%Here, finite CSI signals are assumed (i.e. partial CSI) for convenience only.

%The results in this work can also apply to independent block fading channels since our focus is on Nash equilibrium in stationary transmission.
%Excluding CSI signals with zero probability, we can assume the probability of observing a particular CSI signal in $\{z_{i1},z_{i2},\dots,z_{ix_i}\}$ is larger than zero
\subsection{Two Capture Models}
\begin{enumerate}
\item The SINR capture model:
in this model, node $i$'s packet will be successfully received at time slot $k$ if
$SINR_{i,k}>b$, where $b$ is the capture ratio, and $SINR_{i,k}$ is the SINR of node $i$ at time slot $k$ given by
\begin{align}
SINR_{i,k}=\frac{B_{i,k}|h_{i,k}|^2P_T}{\sum_{j\neq{}i}B_{j,k}|h_{j,k}|^2P_T+N_0}
\end{align}
with
%$B_{i,k}$ is a binary value, which equal to $1$ if node $i$ transmits at time slot $k$, otherwise equal to $0$, and
$N_0$ being the power of the additive noise at the BS. If we set $N_0=0$ and the capture ratio $b=\infty$, this capture model becomes the collision model.
\item The power capture model:
in this model, node $i$'s packet is successfully received at time slot $k$ if %its received power $P_i$ satisfies
\begin{align}\label{eq_power_capture_model}
%P_i > (1+\Delta)\cdot \max_{j\neq i}\{P_j\},
B_{i,k}|h_{i,k}|^2> \max_{j\neq i}\{(1+\Delta)B_{j,k}|h_{j,k}|^2\},
\end{align}
where $\Delta\ge 0$. When $\Delta=\infty$, we have the collision model; and when $\Delta=0$, we have the perfect power capture model for which the packet with the highest received power is always captured.
\end{enumerate}

Throughout this paper, we will only consider that the channels are i.i.d. \emph{Rayleigh fading} \cite{Proakis_COM}. This is the most commonly used model in wireless communications in urban areas. Hence, $|h_{i,k}|^2$ are i.i.d. \emph{exponential} random variables.

\subsection{Noncooperative Game Formulation}
In the network, each node tries to minimize its average transmission probability (or equivalently, average power investment) and selfishly makes the decision whether to transmit or not according to the current CSI, while meeting the \emph{average throughput demand} (in packets per slot), denoted $\rho_i$ for node $i$.
%The throughput demand of each node may be dictated by its application such as voice or video, or packet arrival rate in a sensor network.
It is further assumed that all nodes always have packets buffered for transmission at any time.

This system can be modeled as a noncooperative game with constraints which are the average throughput demands. The selfish nodes are the players, and the action of a player (node) at every time slot is to transmit or not. For generality, in order to meet any average throughput demand while minimizing the average transmission probability, the decision whether to transmit or not is relaxed from being deterministic to being probabilistic. To this end, an action is defined as transmission with a certain probability. With the i.i.d. channel gain processes, we focus on \emph{stationary} transmission strategies (as in \cite{Aloha_JSAC08}) which depend on the \emph{current CSI}. Thus, we let $\textbf{s}_i=(s_{i1},s_{i2},\ldots,s_{ix_i})\in{}[0,1]^{x_i}$ denote node $i$'s transmission strategy such that it transmits with probability $s_{im}$ (the $m$-th entry of $\textbf{s}_i$) when the CSI is $z_{im}$. $\{ s_{i1},s_{i2},\ldots,s_{ix_i} \}$ also defines the action space of node $i$.
Besides the actions of transmission with certain probabilities, the other action of node $i$ is to adjust its transmission strategy $\textbf{s}_i$ such that it can sustain the average throughput demand while minimizing the average transmission probability denoted by $p_i$. The Nash equilibria of the stationary transmission strategies and how the nodes can adjust their transmission strategies in this noncooperative game to arrive at an equilibrium will be discussed later.
%Finally, the utility of node $i$ should be a decreasing function of the average transmission probability (power investment) that can sustain the average throughput demand.

Due to the constraints (average throughput demands), the average transmission probabilities (power investments) of the nodes are nonzero (except for the trivial case with zero throughput demand), and the interaction between nodes is through their mutual interference or competition to have the highest received power. We now discuss the best response strategies for each selfish node in the network.

\begin{definition}
A stationary \emph{threshold strategy} for node $i$ has the form $\textbf{s}_i=(0,\ldots,0,s_{im},1\ldots,1)$. That is, the node always transmits when the CSI is larger than the threshold $z_{im}$, and never transmits when the CSI is smaller than $z_{im}$, while the transmission probability when the CSI is $z_{im}$ is $s_{im}$.
\end{definition}

For example, $\textbf{s}_i=(0,\ldots,0,0.5,1)$ means that node $i$ always transmits when the CSI is the largest one, and transmits with probability $0.5$ when the CSI is the second largest one. For the other CSI values, node $i$ does not transmit.

%It can be proved that the stationary threshold strategy is the best response transmission strategy, under both the power capture and the SINR capture models, for each selfish node to minimize its average transmission probability while meeting the average throughput demand. This is because transmission at higher CSI can get packet success with higher probability.
%To be more specific,
%since the channels of different nodes are independent, for a particular node which does not know the CSI of the other nodes, no matter at what time slot this node transmits and what the other nodes' transmission strategies are, the packet success probability of this node will be affected by the other nodes through the \emph{average interferences} they cause. Thus, for the node in consideration, transmitting when its CSI is higher will result in higher average SINR and hence higher success probability in the SINR capture model. With a similar argument, transmitting at higher CSI will also result in higher success probability in the power capture model.

%In the following, we prove mathematically that the stationary threshold strategy is the best response transmission strategy.
%This proof by contradiction is similar to the proof for the collision channels in \cite{Aloha_JSAC08}.
We have the following proposition as in \cite{Aloha_JSAC08} that the stationary threshold strategy is the best response transmission strategy for each node. The reason is that transmitting at higher CSI will result in higher probability of packet success (or higher throughput), hence more power saving.
To be more specific, since the channels of different nodes are independent, for a particular node which does not know the CSI of the other nodes, no matter at what time slot this node transmits and what the other nodes¡¦ transmission strategies are, the packet success probability of this node will be affected by the other nodes through the average interferences they cause. Thus, for the node in consideration,
transmitting when its CSI is higher will result in higher average SINR and hence higher success probability in the SINR capture model. With a similar argument, transmitting at higher CSI will also result in higher success probability in the power capture model.
Since the proof is similar to that of \cite[\emph{Lemma 1}]{Aloha_JSAC08}, we omit it.
\begin{proposition}\label{best-response}
The \emph{best response} transmission strategy for each selfish node in terms of minimizing its average transmission probability (or average power investment) while meeting the average throughput demand, is a threshold strategy under the power capture and the SINR capture models.
\end{proposition}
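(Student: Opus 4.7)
The plan is to fix the strategies of all other nodes and show that node $i$'s problem reduces to a simple linear program whose optimum is greedy in decreasing CSI. First, I would exploit the cross-node independence assumption. Because the channel gains $\{|h_{j,k}|^2\}_{j\ne i}$ and the indicator variables $\{B_{j,k}\}_{j\ne i}$ are jointly independent of $|h_{i,k}|^2$, the success event for node $i$ (given that it transmits) is a function of $|h_{i,k}|^2$ and an independent interference/competition random variable whose law is determined only by the opponents' fixed strategies. Writing $g(h) := \Pr(\text{node $i$'s packet is captured} \mid |h_{i,k}|^2 = h,\ B_{i,k}=1)$, the SINR model gives $g(h) = \Pr\!\bigl(h P_T > b(I + N_0)\bigr)$ with $I := \sum_{j\ne i} B_{j,k}|h_{j,k}|^2 P_T$, and the power-capture model gives $g(h) = \Pr\!\bigl(h > (1+\Delta)\max_{j\ne i} B_{j,k}|h_{j,k}|^2\bigr)$. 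Both expressions are manifestly non-decreasing in $h$.

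Next, using the stated property that a larger CSI value corresponds to a higher range of $|h_{i,k}|^2$, the conditional success probabilities $q_{im} := \mathbb{E}\!\left[g(|h_{i,k}|^2) \mid z_{i,k}=z_{im}\right]$ are monotone: $q_{i1}\le q_{i2}\le\cdots\le q_{ix_i}$. Setting $\pi_{im} := \Pr(z_{i,k}=z_{im}) > 0$, node $i$'s problem becomes the linear program
\begin{align*}
\min_{\mathbf{s}_i\in[0,1]^{x_i}}\ \sum_{m=1}^{x_i}\pi_{im}s_{im}\qquad \text{s.t.}\qquad \sum_{m=1}^{x_i}\pi_{im}s_{im}q_{im}\ \ge\ \rho_i.
\end{align*}
The ratio of throughput gained per unit transmission probability spent on component $m$ is exactly $q_{im}$, so the highest-CSI component is the most ``cost-efficient.''

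Finally, I would close with a standard exchange argument. Suppose an optimal $\mathbf{s}_i$ is not threshold-shaped: then there exist indices $\ell<m$ with $s_{i\ell}>0$ and $s_{im}<1$, hence $q_{i\ell}\le q_{im}$. Shifting mass from the low-CSI slot to the high-CSI slot by increasing $s_{im}$ by $\epsilon\pi_{i\ell}/\pi_{im}$ and decreasing $s_{i\ell}$ by $\epsilon$ preserves feasibility and the value of $p_i$, while increasing the throughput by $\epsilon\pi_{i\ell}(q_{im}-q_{i\ell})\ge 0$; any resulting throughput slack can be converted into a strict reduction of $p_i$ by further decreasing $s_{i\ell}$. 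Iterating eliminates every non-threshold pattern and shows the optimum has the form $\mathbf{s}_i=(0,\dots,0,s_{im^\ast},1,\dots,1)$.

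The only delicate point is the tie case $q_{i\ell}=q_{im}$ with $\ell<m$, where the exchange does not strictly improve and a non-threshold optimum can coexist with a threshold one. This is benign for the proposition's claim: the argument still produces \emph{an} optimal threshold strategy. The more substantive work — and the main obstacle I would anticipate — is justifying rigorously that the interference law seen by node $i$ is unaffected by node $i$'s own CSI, which requires the cross-node channel independence assumption together with the fact that opponents' stationary strategies depend only on their own CSI; after that, monotonicity of $g$ and the CSI-to-gain ordering do the rest.
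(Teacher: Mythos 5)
Your proposal is correct and follows essentially the same route the paper takes: the paper omits a formal proof (deferring to the analogous lemma in the cited reference) but its stated justification is precisely your chain of reasoning — cross-node independence makes the interference law independent of node $i$'s own gain, the conditional capture probability is non-decreasing in $|h_{i,k}|^2$ under both models, and hence transmitting at higher CSI is more throughput-efficient per unit of transmission probability. Your LP formulation and exchange argument simply make that sketch rigorous, and your flagged subtleties (the tie case and the independence of the interference law from node $i$'s CSI) are handled correctly.
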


\emph{Remarks}:
\begin{enumerate}
\item
%It can be easily verified that a threshold strategy $\textbf{s}_i$ uniquely determines the average transmission probability $p_i$ of node $i$ (or see \cite{Aloha_JSAC08}), and the inverse is true with the selection of the CSI set such that all CSI values in the set have nonzero probabilities.
As a result of \emph{Proposition \ref{best-response}}, we have that a threshold strategy $\textbf{s}_i=(s_{i1},s_{i2},\ldots,s_{ix_i})=(0,\ldots,0,s_{im},1\ldots,1)$ uniquely determines the average transmission probability $p_i$ of node $i$ by
\begin{align}
p_i=s_{im}P_i(m)+\sum^{x_i}_{j=m+1} P_i(j)
\end{align}
where $P_i(j)$ is the probability of occurrence of $z_{ij}$ for node $i$.
Therefore, we will analyze the Nash equilibrium in terms of $p_i$ (as in \cite{Aloha_JSAC08}) in the remainder of this paper.

\item
The best response strategy for a node to adjust its average transmission probability (hence the corresponding stationary threshold strategy) in reaction to the given strategies of the other nodes, is to equalize the throughput achieved by the average transmission probability with the average throughput demand (so the average transmission probability is minimized) \cite{Aloha_JSAC08}.
%In this paper we will focus on analyzing the Nash equilibria achievable by the best response strategies of the nodes in this noncooperative game. The readers are referred to \cite{Sabir09}, \cite{Aloha_JSAC08} and \cite{Distributed_Update_JSAC08} for the distributed algorithms (which can be straightforwardly modified to adjust the average transmission probabilities in our case) and their best response dynamics, in random access games converging to the (better) Nash equilibria.
\item
In the case when CSI is not available (equivalently, there is only one possible CSI value), the threshold strategy of node $i$ becomes random transmission with probability $p_i$ at every time slot. The action space in this case (with only one possible action) is apparently smaller than that of the case when CSI is available. In the limiting case when perfect CSI is available (e.g., the CSI takes the exact value of $|h_{i,k}|$ for node $i$ at time slot $k$, that is, there are an infinite number of possible CSI values),
the threshold strategy of node $i$ becomes transmission only if the CSI is above the threshold which is selected such that the average transmission probability is $p_i$.
In the sequel, only the limiting cases without CSI and with perfect CSI are considered for the brevity of analytically studying the Nash equilibrium points and presenting the Braess-like paradoxes.
%(which occurs for general CSI availability).

%The threshold in the threshold strategy can be uniquely determined once the average transmission probability is determined.
%In practice, the threshold and the transmission probability can be obtained together by the distributed mechanisms discussed in \rsection{Distributed Algorithm}.
\end{enumerate}

\subsection{Nash Equilibria}
Let $\textbf{p}_{-i}$ represent the vector of the transmission probabilities of all nodes except node $i$, and $r_i(p_i,\textbf{p}_{-i})$ represent the average throughput of node $i$ when it transmits with probability $p_i$ given that the other nodes transmit with probability vector $\textbf{p}_{-i}$.
%Our problem can be stated as a noncooperative game as follows: the selfish nodes are the players; the transmission probability $p_i\in[0,1]$ determines the action for (player) node $i$;
For the noncooperative game in consideration, we define the transmission probability vector $\textbf{p}=(p_1,\ldots,p_n)\in[0,1]^n$ as an action profile. The utility function for node $i$, given that the other nodes transmit with probability vector $\textbf{p}_{-i}$, is defined as $U_i(p_i,\textbf{p}_{-i})=1-p_i$ (which may be seen as the power left for node $i$).

We give the definition of the (constrained) Nash equilibrium point in our noncooperative game.
\begin{definition}\label{def_NE}
An action profile $\textbf{p}$ is a (constrained) Nash equilibrium point if for all $i=1,\ldots,n$, we have
\begin{align}
\begin{cases}
r_i(p_i,\textbf{p}_{-i})\geq\rho_i\\
U_i(p_i,\textbf{p}_{-i})\ge U_i(\tilde{p}_i,\textbf{p}_{-i}),~\forall\tilde{p}_i\in\{\tilde{p}_i:r_i(\tilde{p}_i,\textbf{p}_{-i})\geq\rho_i\},
\end{cases}
\end{align}
where $\rho_i$, the average throughput demand, defines a constraint. %$U_i(p_i,\textbf{p}_{-i}) = U_i(\textbf{p})$, and $U_i(\tilde{p}_i,\textbf{p}_{-i})$ is similarly defined by replacing $p_i$ in $\textbf{p}$ with $\tilde{p}_i$.

Equivalently, $\textbf{p}$ is a Nash equilibrium point if
\begin{align}
p_i\in\arg\min_{0\le\tilde{p}_i\le 1}\{\tilde{p}_i:r_i(\tilde{p}_i,\textbf{p}_{-i})\geq\rho_i\},~\forall\ i=1,\ldots,n.
\end{align}
\end{definition}
The above expression means that at a Nash equilibrium point \textbf{p}, each node $i$ would not prefer to deviate from its choice of transmission probability.
It should be noted that our problem is a noncooperative game with constraints,
so there are additional constraints in defining our Nash equilibrium point that differs from the conventional Nash equilibrium point.\footnote{The throughput constraint, which is a form of quality of service (QoS) guarantee, can be incorporated into the utility function by a step function similar to the utility function representation of the QoS in \cite{Goodman00}. For example, we can let the utility function be $U_i(p_i,\textbf{p}_{-i})=Q_i(r_i)\cdot(1-p_i)$, where $Q_i(r_i)=1$ if $r_i\ge \rho_i$, and $Q_i(r_i)=0$ otherwise. With this utility function, a node which can not meet its throughput demand has utility 0. This unconstrained model is more general because it can handle the situation where the system can not sustain all nodes' throughput demands and some nodes will have zero utility. On the other hand, its Nash equilibria are much more difficult to analyze because for a node that can not meet its throughput demand, taking any transmission probability will result in zero utility, but different transmission probabilities will have different impacts on the other nodes' throughputs and utilities. The constrained model focuses on the case where all nodes' throughput demands can be met, and is more straightforward to analyze.}
%The setting of our (constrained) Nash equilibrium point is more appropriate to guarantee the throughput and then analysis the achievable throughput demands.}
%Due to the constraints (average throughput demands), the utility is upper bounded (i.e., the average power investment is bounded away from zero), and the interaction between nodes is through their mutual interference or competition to have the highest received power.

Since $r_i(p_i,\textbf{p}_{-i})$ and the utility function are nondecreasing functions of $p_i$ when $\textbf{p}_{-i}$ is given under both the power capture and the SINR capture models\footnote{This is intuitive, and can be verified by the analytical expressions of $r_i(p_i,\textbf{p}_{-i})$ in (\ref{thrpt_eq_no_CSI}) and (\ref{eq_NE_rhoi}) for SINR and power capture cases, respectively, without CSI; and in (\ref{thrpt_eq_CSI_SINR_capture}) and (\ref{thrpt_eq_CSI_power_capture}) (or (\ref{eq_rho'_homogeneous}), for homogeneous nodes) for the cases with perfect CSI.}, it follows that
an average transmission probability vector $(p_1,\ldots,p_n)$ (where $p_i \in [0,1], \forall i$) is a Nash equilibrium point for the average throughput demands $(\rho_i,\ldots,\rho_n)$ if and only if it is a solution to the set of equations
\begin{align}
r_i(p_i,\textbf{p}_{-i})=\rho_i, \; i=1,\ldots,n.
\end{align}

%Two type of CSI signals are used in this paper, our focus is on nodes with no CSI signal, and we use nodes with perfect CSI signals to point out the potential of Braess-like paradox.

%In the following, we will analyze the Nash equilibrium point for the power capture model and the SINR capture model.

\section{SINR Capture Model}\label{Section_SINR_Capture_Model}
\subsection{Equilibrium point analysis for SINR capture}\label{Subsection_Equilibrium_Point_SINR}
We first consider the case without CSI. In a given time slot, assuming that $n$ nodes simultaneously transmit to the BS, the probability that the packet from a particular node (say, node 1) is successfully received is given by \cite{Dua_TWC08}
\begin{align}
Pr\left[|h_1|^2>b\sum^n_{i=2}|h_i|^2+b\frac{N_0}{P_T}\right]
%&=Pr\left[\sum^n_{i=2}|h_i|^2-\frac{|h_1|^2}{b}<-\frac{N_0}{P_T}\right] \notag \\
&=\left(\frac{1}{1+b}\right)^{n-1}e^{-b\frac{N_0}{P_T}}.
\end{align}
%where the time index of the channel gains is omitted for brevity and the last equality is obtained as follows.
%Let the probability density function (PDF) of $|h_1|^2$ be $f_{|h_1|^2}(x)=e^{-x}, x\ge 0$, then $f_{-\frac{|h_1|^2}{b}}(x)=be^{bx}, x\le 0$, and the PDF of the sum of i.i.d. $|h_i|^2$ is given by $f_{\sum^n_{i=2}|h_i|^2}(x)=\frac{x^{n-2}}{(n-2)!}e^{-x}, x\ge 0$. It follows that the PDF of
%$\sum^n_{i=2}|h_i|^2-\frac{|h_1|^2}{b}$ for $x\le 0$ is given by
%\begin{align*}
%f_{\sum^n_{i=2}|h_i|^2-\frac{|h_1|^2}{b}}(x)&=\int^{\infty}_0e^{-t}\frac{t^{n-2}}{(n-2)!}be^{b(x-t)}dt\notag\\
%&=\left(\frac{1}{1+b}\right)^{n-1}be^{bx}~(\text{for}~x\le 0).
%\end{align*}
%Therefore,
%\begin{align*}
%Pr\left[\sum^n_{i=2}|h_i|^2-\frac{|h_1|^2}{b}<-\frac{N_0}{P_T}\right]
%&=\int^{-\frac{N_0}{P_T}}_{-\infty}f_{\sum^n_{i=2}|h_i|^2-\frac{|h_1|^2}{b}}(x) dx\\
%&=\left(\frac{1}{1+b}\right)^{n-1}e^{-b\frac{N_0}{P_T}}.
%\end{align*}

We first give the following lemma about the average throughput.
\begin{lemma}\label{Through_lemma}
Under the SINR capture model with capture ratio $b$, and i.i.d. Rayleigh fading channels between all nodes and the BS, we have the average throughput of node $i$ when the transmission probability vector $\textbf{p}=(p_1,\ldots,p_n)$ and no CSI is available to all nodes:
\begin{align}\label{thrpt_eq_no_CSI}
r_i(p_i,\textbf{p}_{-i})=e^{-b\frac{N_0}{P_T}}p_i\prod_{j\neq{}i}\left(1-\frac{bp_j}{1+b}\right).
\end{align}
\end{lemma}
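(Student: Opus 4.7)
The plan is to compute $r_i$ by conditioning on the random subset $S \subseteq \{1,\ldots,n\}\setminus\{i\}$ of other nodes that actually transmit in a given slot, and then reducing each conditional success probability to the closed form already displayed just above the lemma.

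First I would write the throughput as
\begin{align*}
r_i(p_i,\textbf{p}_{-i}) = p_i \sum_{S\subseteq\{1,\ldots,n\}\setminus\{i\}} \Big(\prod_{j\in S} p_j \prod_{j\notin S,\,j\neq i}(1-p_j)\Big) \cdot \Pr[\text{success}\mid i \text{ and } S \text{ transmit}],
\end{align*}
using independence of transmissions (no CSI means the $B_{j,k}$ are independent Bernoulli$(p_j)$) and independence from the channel gains. Next, the conditional success probability is exactly the event $|h_i|^2 > b\sum_{j\in S}|h_j|^2 + bN_0/P_T$, which by the same computation given just before the lemma (for the case of $|S|+1$ simultaneously transmitting nodes, using that the $|h_j|^2$ are i.i.d. exponential and that for an exponential variable with unit rate $E[e^{-b|h_j|^2}] = 1/(1+b)$) equals
\begin{align*}
e^{-b N_0/P_T}\left(\frac{1}{1+b}\right)^{|S|}.
\end{align*}

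Then I would plug this back in and pull the factor $e^{-bN_0/P_T}$ out of the sum, rewriting the remaining sum as
\begin{align*}
\sum_{S\subseteq\{1,\ldots,n\}\setminus\{i\}} \prod_{j\in S}\frac{p_j}{1+b}\prod_{j\notin S,\,j\neq i}(1-p_j) = \prod_{j\neq i}\Big(\frac{p_j}{1+b} + (1-p_j)\Big),
\end{align*}
which is the standard product-of-sums identity for indicator variables. Simplifying each factor gives $1 - \tfrac{b p_j}{1+b}$, yielding the claimed formula.

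The only step that is more than bookkeeping is recognizing that, because the $\{|h_j|^2\}_{j\in S}$ are i.i.d. exponential and independent of $|h_i|^2$, the moment generating function of their sum factorizes, so the conditional success probability depends on $S$ only through $|S|$. Once this is in hand, the rest is the subset-sum factorization, which is mechanical; I would not expect any real obstacle, since the lemma is essentially a direct generalization of the single displayed equation that precedes it.
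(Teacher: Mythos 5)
Your proposal is correct and follows essentially the same route as the paper's proof: the paper likewise conditions on which other nodes transmit (organizing the sum by the number of simultaneous transmitters), applies the per-cardinality success probability $\left(\tfrac{1}{1+b}\right)^{|S|}e^{-bN_0/P_T}$, and collapses the subset sum into the product $\prod_{j\neq i}\left(\tfrac{p_j}{1+b}+(1-p_j)\right)$. Your explicit justification that the conditional success probability depends on $S$ only through $|S|$ (via the factorizing MGF of i.i.d.\ exponentials) is a point the paper leaves implicit, but the argument is the same.
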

\begin{proof}
Let $(x_1,\ldots,x_k)\in{}I_{-\{y_1,\ldots,y_s\}}$ denote $x_1<\cdots<x_k$, all belonging to the node index set $I_{-\{y_1,\ldots,y_s\}}\triangleq\{1,2,\ldots,n\}\setminus \{y_1,\ldots,y_s\}$, where $\setminus$ denotes the set minus operator.
The throughput of node $i$ can be computed as
\begin{align*}
r_i(p_i,\textbf{p}_{-i})=&p_i\cdot\prod_{j\in{}I_{-i}}(1-p_j)\cdot e^{-b\frac{N_0}{P_T}}\\
&+p_i\cdot\sum_{j\in{}I_{-i}}\left(p_j\prod_{k\in{}I_{-\{i,j\}}}(1-p_k)\right)\cdot\left(\frac{1}{1+b}\right)e^{-b\frac{N_0}{P_T}}\\
&+p_i\cdot\sum_{(j,k)\in{}I_{-i}}\left(p_jp_k\prod_{l\in{}I_{-\{i,j,k\}}}(1-p_l)\right)\cdot\left(\frac{1}{1+b}\right)^2e^{-b\frac{N_0}{P_T}}\\
&+\cdots\\
&+p_i\left(\prod_{j\in{}I_{-i}}p_j\right)\cdot\left(\frac{1}{1+b}\right)^{n-1}e^{-b\frac{N_0}{P_T}}\\
%=&e^{-b\frac{N_0}{P_T}}p_i\left\{\prod_{j\in{}I_{-i}}(1-p_j)\right.\\
%&+\sum_{j\in{}I_{-i}}\left(\left(\frac{p_j}{1+b}\right)\prod_{k\in{}I_{-\{i,j\}}}(1-p_k)\right)\\
%&+\sum_{(j,k)\in{}I_{-i}}\left(\left(\frac{p_j}{1+b}\right)\left(\frac{p_k}{1+b}\right)\prod_{l\in{}I_{-\{i,j,k\}}}(1-p_l)\right)\\
%&+\cdots\\
%&\left.+\prod_{j\in{}I_{-i}}\left(\frac{p_j}{1+b}\right)\right\}\\
=&e^{-b\frac{N_0}{P_T}}p_i\prod_{j\neq{}i}\left[\left(\frac{p_j}{1+b}\right)+\left(1-p_j\right)\right]\\
=&e^{-b\frac{N_0}{P_T}}p_i\prod_{j\neq{}i}\left(1-\frac{bp_j}{1+b}\right).
\end{align*}
\end{proof}

%In fact, we have a more general result when the node $i$ transmits with fixed power $P^{(i)}_T$ (or the power loss due to the distance between the node $i$ and the control center is taken into account) as follows, whose proof is similar, so we omit the proof here.
%\begin{theorem}
%Under SINR capture model with capture ratio $b$, and Rayleigh fading channels between nodes and the control center (satisfying the \emph{Assumption 1}), we have the throughput of node $i$ when no CSI signals available for all nodes if the node $i$ transmits with fixed power $P^{(i)}_T$:
%\begin{align}
%r_i(p_i,\textbf{p}_{-i})=Rp_i\prod_{j\neq{}i}\left(1-\frac{bp_j}{\frac{P^{(i)}_T}{P^{(j)}_T}+b}\right)
%\end{align}
%\end{theorem}

\begin{definition}\label{def_achievable}
An average throughput demand vector $(\rho_1,\ldots,\rho_n)$ is called \emph{achievable} if there is a Nash equilibrium point for it (i.e., satisfying \emph{Definition \ref{def_NE}}). The set of all achievable average throughput demand vectors is called the \emph{feasible} throughput region.
\end{definition}

By \emph{Lemma \ref{Through_lemma}}, \emph{Definition \ref{def_NE}} and \emph{Definition \ref{def_achievable}}, an average throughput demand vector $(\rho_1,\ldots,\rho_n)$ is achievable under the SINR capture model if there exists a Nash equilibrium point $(p_1,\ldots,p_n)$ such that $\rho_i=e^{-b\frac{N_0}{P_T}}p_i\prod_{j\neq i}(1-\frac{bp_j}{1+b}), \forall i$. We observe that this expression is similar to that in \cite{Aloha_JSAC08} for the collision model, thus the following result can be easily obtained with a proof similar to that in \cite[\emph{Theorem 3}]{Aloha_JSAC08}. The proof is omitted for conciseness.
%The proof is given in the appendix for completeness.

\begin{theorem}\label{Optimum operating region}
For the SINR capture model with capture ratio $b$, there are at most two Nash equilibrium points for any achievable throughput demands $(\rho_1,\ldots,\rho_n)$ when no CSI is available to all nodes, and exactly one of the Nash equilibrium point can be achieved with
\begin{align*}
\sum^n_{i=1}p_i\leq{}\frac{b+1}{b}.
\end{align*}
\end{theorem}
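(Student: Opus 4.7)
The plan is to reduce the SINR-capture equilibrium system to a single scalar polynomial equation via the substitution $q_j = \frac{bp_j}{1+b}$, and then apply a convexity argument in parallel to \cite[Theorem 3]{Aloha_JSAC08}. By \emph{Lemma \ref{Through_lemma}} and the equivalence stated just before the theorem, $(p_1,\dots,p_n)$ is a Nash equilibrium iff $\rho_i = \alpha\, p_i\prod_{j\neq i}(1-q_j)$ for all $i$, where $\alpha = e^{-bN_0/P_T}$. Writing $\Pi = \prod_j(1-q_j)$ and isolating $p_i$ gives $p_i = \frac{\rho_i(1+b)}{\alpha(1+b)\Pi + b\rho_i}$. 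Setting $u = \alpha(1+b)\Pi$, this reads $p_i = \frac{\rho_i(1+b)}{u+b\rho_i}$ and $q_i = \frac{b\rho_i}{u+b\rho_i}$. Substituting these back into the definition of $\Pi$ collapses the $n$-variable fixed-point system into the single scalar self-consistency equation
\begin{equation*}
\prod_{j=1}^n(u+b\rho_j) \;=\; \alpha(1+b)\,u^{n-1}.
\end{equation*}

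Thus every Nash equilibrium corresponds bijectively to a positive root $u$ of this equation. To count such roots, I would put $v = 1/u$ to obtain $\prod_j(1+b\rho_j v) = \alpha(1+b)\,v$. For $n\geq 2$ the left-hand side is a polynomial in $v$ with strictly positive coefficients, hence strictly convex on $v>0$; the right-hand side is linear. A strictly convex function and an affine function meet in at most two points, yielding at most two positive $v$ (the case $n=1$ is trivial), which proves the cardinality part of the theorem.

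To single out the equilibrium satisfying $\sum_i p_i\leq\frac{b+1}{b}$, I observe that $q_i = \frac{b\rho_i v}{1+b\rho_i v}$, so the bound is equivalent to $\psi(v):=\sum_i\frac{b\rho_i v}{1+b\rho_i v}\leq 1$. Let $\phi(v) = \prod_j(1+b\rho_j v) - \alpha(1+b)\,v$. Differentiating and using $\prod_j(1+b\rho_j v) = \alpha(1+b)\,v$ at a root of $\phi$ yields the key identity $\phi'(v) = \alpha(1+b)\bigl(\psi(v)-1\bigr)$. Strict convexity of $\phi$ together with $\phi(0)=1>0$ forces $\phi'(v_1)\leq 0\leq\phi'(v_2)$ at the two roots $v_1\leq v_2$, hence $\psi(v_1)\leq 1\leq\psi(v_2)$. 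Exactly one equilibrium, the one with the smaller $v$ (equivalently, the larger $u$), therefore satisfies the sum bound.

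The step I expect to be the main obstacle is the reduction in the first paragraph: recognizing that all $n$ coupled equilibrium conditions collapse into a single polynomial equation in the common scalar $u = \alpha(1+b)\Pi$, and that $p_i$, $q_i$, and $\psi(v)$ all admit clean rational parametrizations by $u$ (or $v$). Once this reduction is in hand, everything else is elementary: convexity of $\phi$ bounds the root count, and the one-line derivative identity $\phi'(v) = \alpha(1+b)(\psi(v)-1)$ at a root selects the correct equilibrium.
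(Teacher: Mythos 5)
Your proof is correct, and it supplies in full the argument the paper omits (the paper only remarks that the result follows as in \cite[Theorem 3]{Aloha_JSAC08} for the collision channel); your reduction of the $n$ coupled conditions to the single scalar equation $\prod_j(u+b\rho_j)=\alpha(1+b)u^{n-1}$ via the common factor $u=\alpha(1+b)\prod_j(1-q_j)$ is exactly the natural generalization of that reference's approach, and both the convexity-based root count and the identity $\phi'(v)=\alpha(1+b)(\psi(v)-1)$ at a root check out. Two minor points worth tightening: the correspondence between positive roots $u$ and Nash equilibria is an injection rather than a bijection, since a root yields a valid equilibrium only when $u\ge\max_i\rho_i$ (so that each $p_i=\frac{(1+b)\rho_i}{u+b\rho_i}\le 1$); because each $p_i$ is increasing in $v=1/u$, validity is monotone in $v$, so the smaller-$v$ root is valid whenever any root is, and your conclusions (at most two equilibria, and the bound $\sum_i p_i\le\frac{b+1}{b}$ holding precisely at the smaller-$v$ equilibrium) survive. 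Also, if some $\rho_j=0$ the polynomial $\prod_j(1+b\rho_j v)$ may have degree below two and the strict-convexity claim fails as stated; such nodes have $p_j=0$ at any equilibrium and should be removed before applying the argument.
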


In the case when perfect CSI is available, i.e., the CSI takes the exact value of absolute channel gain, the best response strategy (threshold strategy) for node $i$ is to transmit only when its CSI is larger than a threshold $T_i$. Assume that we have the Nash equilibrium point $(p_1,\ldots,p_n)$ for the throughput demands $(\rho_1,\ldots,\rho_n)$. Then $T_i$ must satisfy $\int^{\infty}_{T_i}e^{-x_i}dx_i=e^{-T_i}=p_i$.
When there are $s$ nodes in the network and all of them have perfect CSI, the probability that these $s$ nodes simultaneously transmit to the BS, and the packet from a particular node (say, node $i$) gets successfully received is
\begin{equation}\label{success_prob_csi}
\int^{\infty}_{T_s}\!\cdots\!\int^{\infty}_{T_1}\!\left(\!\int^{\infty}_{\max\left\{T_i,~b\left(\sum^{s}_{j=1 \atop j\neq i}\!x_j\!+\!\frac{N_0}{P_T}\right)\right\}}\!e^{-x_i}dx_i\!\right)\!e^{-x_1}dx_1\!\cdots{}\!e^{-x_{s}}dx_{s}.
\end{equation}
This expression is very complicated due to the $\max\{\cdot,\cdot\}$ that accounts for the situation where the threshold $T_i$ is already high enough to guarantee successful reception of node $i$'s packet.
However, if % $b$ is large enough such that $b\left(\sum^{s}_{j=1 \atop j\neq i}\!T_j\!+\!\frac{N_0}{P_T}\right)\geq T_i$,
we have
\begin{align}\label{T_i_Assumptions}
b\left(\sum^{s}_{j=1 \atop j\neq i}\!T_j\!+\!\frac{N_0}{P_T}\right)\geq T_i,
\end{align}
(\ref{success_prob_csi}) can be simplified to
\begin{equation}\label{simplified_success_prob_csi}
\prod^{s}_{j=1\atop j\neq i}\left(\frac{p_j^{b+1}}{b+1}\right) e^{-b\frac{N_0}{P_T}}.
\end{equation}
Note that in practical systems (e.g., narrowband systems for which $b>1$), (\ref{T_i_Assumptions}) is usually true when $s \geq 2$.
(\ref{T_i_Assumptions}) is satisfied except there is one particular $T_i$ which is sufficiently larger compared to the other $T_j$'s. Because the transmission probability $p_i$ is exponentially decreasing in $T_i$, this implies that $p_i$ is sufficiently small compared to the other $p_j$'s, and the throughput demand for node $i$ could be nearly zero. In that case, node $i$ could be removed from the analysis with little effect.
Therefore, the extra conditions (\ref{T_i_Assumptions}) on $T_i$'s are usually satisfied.
%for most scenarios. For a few cases that do not meet the conditions require being studied case by case.
When $s=1$, $b\left(\frac{N_0}{P_T}\right)\geq T_i$ may not be true in many situations. Thus, using (\ref{simplified_success_prob_csi}) only for $s \geq 2$, the throughput of node $i$ at the Nash equilibrium is equal to the demand
\begin{align}
r_i(p_i,\textbf{p}_{-i})=&\rho_i \notag\\
=&\prod_{j\in{}I_{-i}}(1-p_j)\cdot \int^{\infty}_{\max\left\{T_i,~b\left(\frac{N_0}{P_T}\right)\right\}}\!e^{-x_i}dx_i \notag\\
&+\sum_{j\in{}I_{-i}}\left(p_j^{b+1}\prod_{l\in{}I_{-\{i,j\}}}(1-p_l)\right)\cdot\left(\frac{1}{b+1}\right)e^{-b\frac{N_0}{P_T}}\notag\\
&+\sum_{(j,l)\in{}I_{-i}}\left(\left(p_jp_l\right)^{b+1}\prod_{l\in{}I_{-\{i,j,l\}}}(1-p_l)\right)\cdot\left(\frac{1}{b+1}\right)^2e^{-b\frac{N_0}{P_T}}\notag\\
&+\cdots \notag \\
&+\left(\prod_{j\in{}I_{-i}}p_j\right)^{b+1}\cdot\left(\frac{1}{b+1}\right)^{n-1}e^{-b\frac{N_0}{P_T}}\notag\\
=&e^{-b\frac{N_0}{P_T}}\prod_{j\neq{}i}\left(\frac{p^{b+1}_j}{b+1}+(1-p_j)\right)+\prod_{j\in{}I_{-i}}(1-p_j)\cdot \min \left\{p_i-e^{-b\frac{N_0}{P_T}}, ~0\right\},\label{thrpt_eq_CSI_SINR_capture}
\end{align}
where the last equality is obtained using an approach similar to the proof of \emph{Lemma \ref{Through_lemma}}.
%Note that this throughput expression is independent of the average transmission probability $p_i$ of node $i$ due to the assumption $b\left(\sum^{k}_{j=1 \atop j\neq i}\!T_j\!+\!\frac{N_0}{P_T}\right)\geq T_i$. When this condition is not valid (i.e., when $T_i$ is higher and $p_i$ is smaller), the throughput derived using the correct packet success probability (\ref{success_prob_csi}) will be increasing in $p_i$, when $\textbf{p}_{-i}$ is given. Overall, the throughput is a nondecreasing function of the average transmission probability.

To analyze the Nash equilibrium when perfect CSI is available is quite difficult due to the complicated equation (\ref{thrpt_eq_CSI_SINR_capture}), so we only show the existence of Nash equilibra for the case with homogeneous nodes. (That is, the throughput demands are $(\rho_1,\ldots,\rho_n)=(\rho,\ldots,\rho)$, the Nash equilibrium point is $(p_1,\ldots,p_n)=(p,\ldots,p)$, and the threshold is $T$ for all nodes such that $\int^{\infty}_{T}e^{-x_i}dx_i=e^{-T}=p, ~\forall i$.)
In this case, when $b\geq 1$, the throughput of a particular node at the Nash equilibrium point can be computed by (\ref{thrpt_eq_CSI_SINR_capture}) as
\begin{equation}\label{thrpt_eq_CSI_homo}
r_i(p, \ldots, p)=\rho=\left[(1-p)+\frac{p^{b+1}}{b+1}\right]^{n-1}e^{-b\frac{N_0}{P_T}}+(1-p)^{n-1}\cdot \min \left\{p-e^{-b\frac{N_0}{P_T}}, ~0\right\}.
\end{equation}

Because $\frac{dr_i}{dp}<0$ for $p>e^{-b\frac{N_0}{P_T}}$ and $\frac{dr_i}{dp}>0$ when $p$ is sufficiently small (i.e., $\lim_{p\rightarrow 0^+}\frac{dr_i}{dp}>0$), there is a particular $p^*\in\left[0,e^{-b\frac{N_0}{P_T}}\right]$ achieving the maximum of (\ref{thrpt_eq_CSI_homo}), denoted by $\rho_{max}$. Therefore, when perfect CSI is available to homogeneous nodes and given the throughput demand $\rho\le\rho_{max}$, there is at least one Nash equilibrium. Through simulations, we found that there are at most two Nash equilibria even in the case with heterogeneous nodes. However, we are not able to provide a rigorous proof.

\subsection{Braess-like paradox for SINR capture}\label{Subsection_Paradox_SINRcapture}
In this subsection, we will present analytically a Braess-like paradox for the case with homogeneous nodes.
We have the following theorem for the interference limited situation, i.e., $\frac{N_0}{P_T}$ is sufficiently small:

\begin{theorem}\label{Braess_SINR}
With the same average transmission probability $p$ (or average power investment), the throughput of homogeneous nodes with perfect CSI is not larger than that of homogeneous nodes with no CSI when $1 < b < \infty$ and $\frac{N_0}{P_T}$ is sufficiently small.
%$b \nrightarrow \infty$,
\end{theorem}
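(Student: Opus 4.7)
The plan is to reduce the statement to an explicit algebraic inequality at the interference-limited limit $N_0/P_T\to 0$ and then transfer it to small $N_0/P_T>0$ by continuity. Substituting $p_i=p$ into (\ref{thrpt_eq_no_CSI}) and (\ref{thrpt_eq_CSI_homo}) and writing $a:=e^{-bN_0/P_T}$, the two homogeneous throughputs at a common transmission probability $p$ read
\begin{align*}
r_{\text{no CSI}}(p,a) &= ap\left[1-\tfrac{bp}{1+b}\right]^{n-1},\\
r_{\text{CSI}}(p,a) &= a\left[(1-p)+\tfrac{p^{b+1}}{b+1}\right]^{n-1}+(1-p)^{n-1}\min\{p-a,\,0\}.
\end{align*}
I would first establish $r_{\text{no CSI}}(p,1)\ge r_{\text{CSI}}(p,1)$ with strict inequality on $p\in(0,1)$, and then lift the result by continuity in $a$.

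For the core calculation, note that at $a=1$ and $p<1$ the $\min$ equals $p-1$, so $(1-p)^{n-1}(p-1)=-(1-p)^{n}$ exactly cancels the $k=0$ contribution $(1-p)^{n-1}$ from the binomial expansion of the CSI bracket, leaving a leading term $p(1-p)^{n-1}$. The same leading term appears after expanding $r_{\text{no CSI}}(p,1)=p[(1-p)+p/(1+b)]^{n-1}$, so the two expansions align index-by-index and yield
\begin{align*}
r_{\text{no CSI}}(p,1)-r_{\text{CSI}}(p,1)=\sum_{k=1}^{n-1}\binom{n-1}{k}\frac{p^{k+1}-p^{k(b+1)}}{(1+b)^{k}}(1-p)^{n-1-k}.
\end{align*}
For $b>1$ and $k\ge 1$ one has $k(b+1)\ge k+1$ (equivalently $kb\ge 1$), so $p^{k(b+1)}\le p^{k+1}$ on $[0,1]$ and every summand is non-negative. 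The $k=1$ summand, $(n-1)(1-p)^{n-2}p^{2}(1-p^{b-1})/(1+b)$, is strictly positive for $p\in(0,1)$, giving strict inequality in that range. The two endpoint cases are handled directly: at $p=0$ both throughputs vanish, and at $p=1$ the CSI threshold $T$ with $e^{-T}=p$ degenerates to $0$, so $r_{\text{CSI}}(1,a)=r_{\text{no CSI}}(1,a)=a/(1+b)^{n-1}$ for every $a$.

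The extension to $N_0/P_T$ sufficiently small then follows from continuity in $a$. Both throughput functions are continuous in $a$ (the $\min$ only introduces a corner at $a=p$, not a jump) and we have shown their difference is strictly positive at $a=1$ for each fixed $p\in(0,1)$; hence the inequality persists on some interval $[a^{\ast}(p),1]$, corresponding to the bound $N_0/P_T\le -(\log a^{\ast}(p))/b$ announced in the theorem.

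The main obstacle is the bookkeeping in the binomial step: one has to recognise that the $-(1-p)^{n}$ produced by the $\min$ is precisely what cancels the $k=0$ term of the CSI bracket, so that the residual difference reorganises into the non-negative monomial sum above. Once that pairing is spotted, the argument reduces to the elementary inequality $p^{k(b+1)}\le p^{k+1}$ and a standard continuity extension. A minor side check is that the auxiliary condition (\ref{T_i_Assumptions}) justifying the use of (\ref{thrpt_eq_CSI_homo}) is satisfied in the homogeneous regime with $b>1$ and $s\ge 2$, which is immediate since $b(s-1)\ge b>1$ forces $T(b(s-1)-1)\ge 0\ge -bN_0/P_T$.
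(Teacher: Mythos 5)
Your proposal is correct, and it reaches the inequality by a genuinely different route from the paper. After the same reduction to $a=e^{-bN_0/P_T}=1$, the paper writes the target as
\begin{equation*}
p\Bigl(1-\tfrac{b}{b+1}p\Bigr)^{n-1}+(1-p)(1-p)^{n-1}\ \ge\ \Bigl[(1-p)+\tfrac{p^{b+1}}{b+1}\Bigr]^{n-1}
\end{equation*}
and applies Jensen's inequality to the convex map $x\mapsto x^{n-1}$ with weights $p$ and $1-p$, obtaining the intermediate quantity $\bigl[(1-p)+p^{2}/(b+1)\bigr]^{n-1}$ and then using $p^{b+1}\le p^{2}$ for $b>1$. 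You instead expand both sides binomially, observe that the $-(1-p)^{n}$ coming from the $\min$ term cancels against the $k=0$ term of the CSI bracket, and identify the difference as the explicit nonnegative sum $\sum_{k\ge 1}\binom{n-1}{k}(p^{k+1}-p^{k(b+1)})(1-p)^{n-1-k}/(1+b)^{k}$, each summand controlled by the elementary fact $kb\ge 1$. Your algebra checks out (I verified the $k=0$ cancellation and the endpoint values at $p=0$ and $p=1$). The paper's argument is shorter; yours buys more: an exact expression for the throughput gap, strict inequality on $p\in(0,1)$ via the $k=1$ term, a transparent explanation of why the paradox disappears for $b\le 1$ (the $k=1$ summand changes sign, consistent with Fig.~\ref{Capture0.8_SNR10}), and an explicit continuity step from $a=1$ to small $N_0/P_T$ together with the verification of condition (\ref{T_i_Assumptions}) --- both of which the paper leaves implicit. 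The only caveat is that your continuity extension is pointwise in $p$ (the threshold $a^{*}(p)$ may degenerate as $p\to 0$ or $p\to 1$ where the strict gap vanishes), so it does not by itself give a single bound on $N_0/P_T$ uniform over all $p$; but the paper's own proof does not address this either, and the pointwise reading matches the theorem statement.
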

\begin{proof}
%$b \nrightarrow \infty$ and
Since $\frac{N_0}{P_T} \rightarrow 0$, we have $e^{-b\frac{N_0}{P_T}} \rightarrow 1$. By \emph{Lemma \ref{Through_lemma}} and (\ref{thrpt_eq_CSI_homo}), we need to show that for $b>1$, we have
\begin{align*}
&p\left(1-\frac{b}{b+1}p\right)^{n-1} \ge \left((1-p)+\frac{p^{b+1}}{b+1}\right)^{n-1}-(1-p)^{n}\\
\Leftrightarrow&p\left(1-\frac{b}{b+1}p\right)^{n-1}+(1-p)^{n} \ge \left((1-p)+\frac{p^{b+1}}{b+1}\right)^{n-1}.
\end{align*}
Rewrite the left-hand side, and then apply Jensen's inequality on the convex function $x^{n-1}, (x>0)$, as follows:
\begin{align*}
&p\left(1-\frac{b}{b+1}p\right)^{n-1}+(1-p)(1-p)^{n-1}\\
&\geq{}\left[p\left(1-\frac{b}{b+1}p\right)+(1-p)(1-p)\right]^{n-1}\\
&=\left[(1-p)+\frac{p^2}{b+1}\right]^{n-1}\\
&\geq\left[(1-p)+\frac{p^{b+1}}{b+1}\right]^{n-1}.
\end{align*}
The last inequality comes from the facts that $b>1$ and $0\leq{}p\leq{}1$.

\end{proof}

This Braess-like paradox is illustrated in Fig.~\ref{Capture5_SNRinf} where $b=5$ and $N_0/P_T \rightarrow 0$. As the figure shows, with the same average transmission probability, the throughput of homogeneous nodes with perfect CSI is never larger than that of homogeneous nodes with no CSI.
%Given that the throughput is a unimodal function of the average transmission probability in the case when CSI is not available (see Appendix \ref{proof_Optimum operating region}), this theorem implies that in order to achieve the same average throughput demand
This means that in order to achieve the same average throughput demand
$\rho$, homogeneous nodes with perfect CSI need to have the average transmission probability $p$ (the smaller solution) at the Nash equilibrium point not smaller than that (the smaller solution) with no CSI when $b>1$.
%In other words, even though the availability of perfect CSI can increase the capacity and should never make the optimal performance (in terms of the average transmission probability or power investment required to sustain the average throughput demand) worse, and it also makes the action space larger which allows for more flexible transmission strategies, the resultant Nash equilibrium due to the best response (selfish) strategy has a worse performance compared to that of the case without CSI (which also uses the best response strategy). We call this phenomenon a Braess-like paradox.

\emph{Discussion}:
\begin{enumerate}
\item
This Braess-like paradox is clearly due to the fact that, while improving a node's received power by transmitting when the channel is better, the threshold strategy also increases the average interference seen by each node. As a result, the SINRs of the nodes are not necessarily higher. Thus, if a node could refrain from taking the best response strategy and be less selfish (e.g., by ignoring the CSI and transmitting with the same probability at all time slots), the other nodes would benefit. If all the nodes could do the same, every node would benefit and the performance would improve. However, for a particular node, doing so would be against its best interest given that it does not know the channels of the other nodes.
%(see the proof of \emph{Proposition \ref{best-response}}).
In addition, an individual node would never be sure if the other nodes would also be altruistic, unless a centralized regularization is applied. Thus, centralized control and/or altruism (or cooperation) are necessary to improve the performance. How the optimal performance can be achieved by centralized control or cooperation is an interesting topic that needs further investigation, but is beyond the scope of this paper.
\item
In general (when $N_0/P_T$ is not sufficiently small), \emph{Theorem \ref{Braess_SINR}} cannot be applied. The throughput comparison given the same average transmission probability depends on the average transmission probability $p$, the total number of nodes $n$, and the capture ratio $b$. Fig.~\ref{Capture5_SNR50} shows the cases with $b=5$ and $P_T/N_0 = 50$. It can be seen that when $n=10$ nodes, the throughput with perfect CSI is slightly higher than the throughput without CSI when the average transmission probability is smaller than 0.05. %Under these conditions, there is no Braess-like paradox.
%For $b<1$, the throughput comparison also depends on the average transmission probability, the total number of nodes and the capture ratio.
%The analysis is more complicated, and we omit it.
Fig.~\ref{Capture0.8_SNR10} is an example with $b=0.8$ and $P_T/N_0 = 10$. It is shown that there is no Braess-like paradox.
%However, the feasible throughput region of the case without CSI is still larger than that of the case with perfect CSI when the number of nodes is 10.
\item
When $N_0 = 0$ and $b = \infty$, we have the collision model. In this case, the throughputs of homogeneous nodes with perfect CSI and without CSI will both be $p(1-p)^{n-1}$ when the average transmission probability is $p$.
\item
For the case with heterogeneous nodes, the conditions for the occurrence of Braess-like paradoxes are quite complicated and need to be analyzed case by case. Here, we present a numerical example of a two-node network. Assume $b=5$ and $\frac{N_0}{P_T}=0.01$.
\begin{itemize}
\item Case (i) $p_1=0.52$ and $p_2=0.24$: the throughput demands achievable for node 1 and node 2 without CSI availability are $\rho_1=0.3957$, $\rho_2=0.129$, respectively. With perfect CSI, the throughput demands achievable for node 1 and node 2 are $\rho_1=0.3952$, $\rho_2=0.118$, respectively. A Braess-like paradox occurs.
\item Case (ii) $p_1=0.580$ and $p_2=0.088$: the throughput demands achievable for node 1 and node 2 without CSI availability are $\rho_1=0.511$, $\rho_2=0.04325$, respectively. With perfect CSI, the throughput demands achievable for node 1 and node 2 are $\rho_1=0.529$, $\rho_2=0.04300$, respectively. The results show that it is possible that node 2 gains while node 1 suffers when CSI is available.
%-----------------------------------------------------------------------------------------
%$\frac{N_0}{P_T}=0.03$.
%    \item Case (i) $p_1=0.52$ and $p_2=0.51$: the throughputs for node 1 and node 2 without CSI availability are $\rho_1=0.25735$, $\rho_2=0.2487$, respectively. With perfect CSI, the throughputs for node 1 and node 2 are $\rho_1=0.25732$, $\rho_2=0.2476$, respectively. A Braess-like paradox occurs.
%\item Case (ii) $p_1=0.221$ and $p_2=0.6$: the throughputs for node 1 and node 2 without CSI availability are $\rho_1=0.09511$, $\rho_2=0.4213$, respectively. With perfect CSI, the throughputs for node 1 and node 2 are $\rho_1=0.09509$, $\rho_2=0.4674$, respectively. The results show that it is possible that node 2 gains while node 1 suffers when CSI is available.
\end{itemize}
\end{enumerate}

\section{Power Capture Model}\label{Section_Power_Capture_Model}
\subsection{Equilibrium point analysis for power capture}\label{Subsection_Equilibrium_Point_power}
We first analyze the case when CSI is not available.
Let the PDF and the cumulative density function (CDF) of $|h_{i,k}|^2$ be $f(x)$ and $F(x)$, respectively.
The received power $P_i$ of node $i$ at time $k$ is given by $B_{i,k}P_T|h_{i,k}|^2$ where $B_{i,k}=1$ if node $i$ transmits at time $k$ and $B_{i,k}=0$ otherwise. Then the average throughput of node $i$ is given by
\begin{align*}
r_i(p_i,\textbf{p}_{-i})
&=Pr\left[B_{i,k}|h_{i,k}|^2>\max_{j\neq i}\left\{(1+\Delta)B_{j,k}|h_{j,k}|^2\right\}\right]\\
&=p_i\int_0^{\infty}\prod_{j\neq{}i}
\left[\int_0^{\frac{{x_i}}{1+\Delta}}\!\!f(x_j)dx_j\!+\!(1\!-\!p_j)\int_{{\frac{{x_i}}{1+\Delta}}}^{\infty}\!\!f(x_j)dx_j\right]\!f({x_i})d{x_i}\notag\\
&=p_i\int_0^{\infty}\prod_{j\neq{}i}
\left[1-p_j\left(1-F\left({\frac{{x_i}}{1+\Delta}}\right)\right)\right]\!f({{x_i}})d{x_i}\notag\\
&=p_i\left\{1-\left(\sum_{j\neq{}i}p_j\right)\int_0^{\infty}\left(1-F\left(\frac{{x_i}}{1+\Delta}\right)\right)f({x_i})d{x_i} \right.\notag\\
&+\left(\sum_{(j,k)\in{}I_{-i}}p_jp_k\right)\int_0^{\infty}\left(1-F\left(\frac{{x_i}}{1+\Delta}\right)\right)^2f({x_i})d{x_i}\notag\\
&-\cdots\notag\\
&\left.+(-1)^{n-1}\left(\prod_{j\neq{}i}p_j\right)\int_0^{\infty}\left(1-F\left(\frac{{x_i}}{1+\Delta}\right)\right)^{n-1}f({x_i})d{x_i}\right\}\notag\\
\end{align*}
where the last equality is obtained by expanding the product terms in the integral, and the notation $(y_1,y_2,\ldots,y_k)\in{}I_{-i}$ means that $y_1<y_2<\cdots<y_k$, all belonging to the node index set $I_{-i}\triangleq\{1,2,\ldots,n\}\setminus\{i\}$.

For Rayleigh fading channels, $f(x)$ is the exponential function, so we have
\[
\int_0^{\infty}\left(1-F\left(\frac{{x_i}}{1+\Delta}\right)\right)^{n}f({x_i})d{x_i}=\frac{1+\Delta}{1+\Delta+n}.
\]

Therefore, if $(p_1,\ldots,p_n)$ is a Nash equilibrium point, we have the following set of equations for the corresponding achievable average throughput demands
\begin{align}\label{eq_NE_rhoi}
r_i(p_i,\textbf{p}_{-i})=\rho_i=p_i&-\frac{1+\Delta}{2+\Delta}\sum_{j\in{}I_{-i}}p_ip_j+\frac{1+\Delta}{3+\Delta}\!\sum_{(j,k)\in{}I_{-i}}\!\!p_ip_jp_k\notag\\
&-\cdots+(-1)^{n-1}\frac{1+\Delta}{n+\Delta}\prod^n_{j=1}p_j,~~\forall{}i=1,\ldots,n,
\end{align}
and the resultant total achievable average throughput demand
\begin{align}
\sum^n_{i=1}\rho_i&\!=\!\sum^n_{i=1}p_i\!-\!\frac{2+2\Delta}{2+\Delta}\sum_{(i,j)\in{}I}p_ip_j\!+\!\frac{3+3\Delta}{3+\Delta}\sum_{(i,j,k)\in{}I}p_ip_jp_k\notag\\
&\hspace{0.45in}-\cdots+(-1)^{n-1}\frac{n+n\Delta}{n+\Delta}\prod^n_{i=1}p_i,
\end{align}
where the notation $(y_1,y_2,\ldots,y_k)\in{}I$ means that $y_1<y_2<\cdots<y_k$, all belonging to the node index set $I\triangleq\{1,2,\ldots,n\}$.

\begin{example}[$\Delta=\infty$]
The set of equations for a Nash equilibrium point become $\rho_i=p_i\prod_{j\neq i}(1-p_j)$, i.e., the case $\Delta=\infty$ corresponds to the collision model, and there exist exactly two Nash equilibrium points for any throughput demands within the feasible region \cite{Aloha_JSAC08}.
\end{example}

\begin{example}[$\Delta=0$]
For the special case where $\Delta=0$, i.e., perfect power capture model, we have
\begin{align}\label{Throughput_i_power}
\rho_i=p_i&-\frac{1}{2}\sum_{j\in{}I_{-i}}p_ip_j+\frac{1}{3}\!\sum_{(j,k)\in{}I_{-i}}\!\!p_ip_jp_k\notag\\
&-\cdots+\frac{(-1)^{n-1}}{n}\prod^n_{j=1}p_j,
\end{align}
and the total achievable average throughput demand
\begin{align}\label{Sum_rho}
\sum^n_{i=1}\rho_i&=\sum^n_{i=1}p_i-\sum_{(i,j)\in{}I}p_ip_j+\sum_{(i,j,k)\in{}I}p_ip_jp_k \notag\\
&\ \ \ -\cdots+(-1)^{n-1}\prod^n_{i=1}p_i\notag\\
&=1-\prod^n_{i=1}(1-p_i).
\end{align}

The following theorem shows that there exists a \emph{unique} Nash equilibrium point:
\begin{theorem}\label{NE for power capture}
Under the perfect power capture model ($\Delta=0$) with no CSI available to all nodes, there exists a \emph{unique} Nash equilibrium point for the average throughput demands $(\rho_1,\ldots,\rho_n), \rho_i\geq{}0, \forall{}i$,
\emph{if and only if}
\[
\sum^n_{i=1}\rho_i\leq{}1
\]
\end{theorem}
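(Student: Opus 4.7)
Plan: The necessity direction is immediate from identity (\ref{Sum_rho}): if $(p_1,\ldots,p_n)\in[0,1]^n$ is a Nash equilibrium, then $\sum_i\rho_i=1-\prod_i(1-p_i)\le 1$ automatically.

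For sufficiency, my first step is to rewrite (\ref{Throughput_i_power}) in a compact integral form. Recognizing that the coefficients $(-1)^k/(k+1)$ appearing in (\ref{Throughput_i_power}) are exactly $\int_0^1(-u)^k\,du$, expanding the product yields
\[
\rho_i \;=\; p_i\int_0^1\prod_{j\ne i}(1-p_j u)\,du \;\equiv\; p_i\,C_{-i}(\textbf{p}_{-i}).
\]
(Equivalently, this comes out of the Rayleigh integral before (\ref{eq_NE_rhoi}) with $\Delta=0$ via $u=e^{-x}$.) The key structural feature is that $C_{-i}$ depends only on $\textbf{p}_{-i}$, so the best response $p_i=\rho_i/C_{-i}(\textbf{p}_{-i})$ is continuous and single-valued.

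For existence, I would apply Brouwer's fixed-point theorem to the continuous map $T:[0,1]^n\to[0,1]^n$ with $T_i(\textbf{p})=\min\{\rho_i/C_{-i}(\textbf{p}_{-i}),\,1\}$; the denominator is bounded below by $\int_0^1(1-u)^{n-1}du=1/n>0$, so $T$ is well defined. I then argue that the truncation at $1$ is never strictly binding under $\sum_i\rho_i\le 1$: a coordinate $p_i^\star=1$ at a fixed point forces $\sum_i\rho_i(\textbf{p}^\star)=1-\prod_i(1-p_i^\star)=1$, which by the hypothesis requires $\sum_i\rho_i=1$ and equality $p_i^\star=\rho_i/C_{-i}(\textbf{p}^\star_{-i})$ in every coordinate. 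Hence $\textbf{p}^\star$ satisfies the Nash equilibrium equations.

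The main obstacle is uniqueness. My plan is to show that $\phi:\textbf{p}\mapsto(\rho_1,\ldots,\rho_n)$ is globally injective on $[0,1]^n$ via the Gale--Nikaido univalence theorem. The Jacobian has diagonal entries $J_{ii}=C_{-i}>0$ and off-diagonal entries
\[
J_{ij} \;=\; -p_i\int_0^1 u\prod_{k\ne i,j}(1-p_k u)\,du \;\le\; 0 \qquad (j\ne i),
\]
so $J$ has the Z-matrix sign pattern. I would then verify strict row-diagonal dominance on the interior $(0,1)^n$, which upgrades $J$ to a nonsingular $M$-matrix (equivalently, a $P$-matrix), and Gale--Nikaido yields global injectivity on the rectangle $[0,1]^n$. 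Diagonal dominance reduces to the integral inequality
\[
\int_0^1 \prod_{j\ne i}(1-p_j u)\Big[\,1 - p_i u\sum_{j\ne i}\tfrac{1}{1-p_j u}\,\Big]\,du \;>\;0,
\]
which I expect to be the hardest step; I would attack it by combining terms over a common denominator, applying the identity $G'(u)=-G(u)\sum_j p_j/(1-p_j u)$ with $G(u)=\prod_j(1-p_j u)$, and integrating by parts to pivot back to the already-established sum identity $\sum_i\rho_i=1-G(1)$.
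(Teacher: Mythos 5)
Your necessity direction, the integral reformulation $\rho_i=p_i\int_0^1\prod_{j\ne i}(1-p_ju)\,du$, and the Brouwer existence argument are all correct, and already constitute a different architecture from the paper's: the paper exhibits the equilibrium equations as the critical-point conditions of a single potential $G(\textbf{p})=\sum_i\rho_i\ln p_i+\int_0^1 x^{-1}\bigl[\prod_j(1-p_jx)-1\bigr]\,dx$ and gets existence (interior maximum) and uniqueness (strict concavity in $t_i=\ln p_i$) in one stroke. However, the step you flag as the hardest is not merely hard --- it is false. The Jacobian is \emph{not} strictly row-diagonally dominant on the open cube. Take $n=3$, $p_1=0.95$, $p_2=p_3=0.3$: then $J_{11}=\int_0^1(1-0.3u)^2\,du=0.73$, while $|J_{12}|+|J_{13}|=2\cdot 0.95\int_0^1u(1-0.3u)\,du=2(0.95)(0.4)=0.76>J_{11}$. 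The defect is structural: every off-diagonal entry in row $i$ carries the factor $p_i$ (since $J_{ij}=-p_i\int_0^1u\prod_{k\ne i,j}(1-p_ku)\,du$), so a node with large $p_i$ facing lightly loaded competitors breaks the row estimate.

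The repair is to sum down columns instead of rows. In column $j$ the off-diagonal entries carry the weights $p_i$, $i\ne j$, and $\sum_{i\ne j}p_i\prod_{k\ne i,j}(1-p_ku)=-\Phi_j'(u)$ with $\Phi_j(u)=\prod_{k\ne j}(1-p_ku)$, so integration by parts gives $\sum_{i\ne j}|J_{ij}|=\int_0^1u\,(-\Phi_j'(u))\,du=J_{jj}-\Phi_j(1)<J_{jj}$, strictly whenever $p_k<1$ for all $k\ne j$. A Z-matrix with positive diagonal that is strictly column-diagonally dominant is a nonsingular M-matrix, hence a P-matrix, and Gale--Nikaido then applies. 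This column computation is, in essence, exactly the paper's key inequality (\ref{eq_thm4_final}): there it emerges after a Cauchy--Schwarz step applied to the Hessian of $G$, and the same identity $\int_0^1\bigl(u\Phi_j'(u)+\Phi_j(u)\bigr)\,du=\Phi_j(1)>0$ closes the argument. One further point needs tightening even after this fix: the strict P-property degenerates on the faces where some $p_k=1$, so invoking Gale--Nikaido on the closed rectangle is not immediate; the paper spends a separate paragraph excluding boundary critical points when $\sum_i\rho_i<1$ and reducing away coordinates with $\rho_i=0$, and your argument would need the analogous bookkeeping.
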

\begin{proof}
In Appendix \ref{proof_NE_for_power_capture}.

\end{proof}

%When $\Delta=\infty$ (the collision model), there exist exactly two Nash equilibrium points for any throughput demands within the feasible region \cite{Aloha_JSAC08}. When $\Delta=0$ (the perfect power capture model), there is a unique Nash equilibrium point.
%From these two extreme cases, we infer that for a general $\Delta$, there exist at most two Nash equilibrium points, since the feasible region shrinks as $\Delta$ increases due to the capture model (\ref{eq_power_capture_model}).
%However, analysis of the Nash equilibrium for a general $\Delta$ is quite complicated with no simple exact form.
\end{example}

In the case when perfect CSI is available to all nodes, a particular node $i$ will transmit only when its CSI is larger than a threshold $T_i$. $T_i$ must satisfy $\int^\infty_{T_i} f(x_i)dx_i=e^{-T_i}=p_i$, where $p_i$ is the average transmission probability of node $i$, and $f(\cdot)$ is the exponential PDF for Rayleigh fading channels. The average throughput of node $i$ can be computed as
\begin{align}
r_i(p_i,\textbf{p}_{-i})&=\int_{T_i}^{\infty}\prod_{j\neq{}i}
\left[\int_0^{T_j}\!\!f(x_j)dx_j\!+\!\int_{T_j}^{\max\left\{\frac{x_i}{1+\Delta}, ~T_j\right\}}\!\!f(x_j)dx_j\right]\!f(x_i)dx_i \notag \\
&=\int_{T_i}^{\infty}\prod_{j\neq{}i}
\left[\max \left\{1-p_j, ~1-e^{-\frac{x_i}{1+\Delta}}\right\}\right]\!f(x_i)dx_i.\label{thrpt_eq_CSI_power_capture}  \end{align}
This expression depends on the specific values of the thresholds (hence the average transmission probabilities and the average throughput demands) of individual nodes.

In the following, we will only consider the case with homogeneous nodes. Let the throughput demands be $(\rho,\ldots,\rho)$ and the Nash equilibrium point $(p_1,\ldots,p_n)=(p,\ldots,p)$. When CSI is not available to all nodes, from (\ref{eq_NE_rhoi}) we have
\begin{align}\label{eq_rho_homogeneous}
r_i(p,\ldots, p)=\rho=p\!-\!{{n-1}\choose{1}}\frac{1+\Delta}{2+\Delta}p^2
   \!+\!\cdots\!+\!(-1)^{n-1}{{n-1}\choose{n-1}}\frac{1+\Delta}{n+\Delta}p^n.
\end{align}
%where ${{n}\choose{k}}=\frac{n!}{(n-k)!k!}$ is the number of $k$-combinations from $n$ elements.

In the case when perfect CSI is available to all nodes, a node will transmit only when its CSI is larger than a threshold $T$. $T$ must satisfy $\int^\infty_T f(x)dx=e^{-T}=p$, where $p$ is the average transmission probability.
%To have the same average transmission probability $p$ (or average power investment) as that of the case without CSI, we need $\int^\infty_T f(x)dx=p$, where $f(\cdot)$ is the exponential PDF for Rayleigh fading channels.
For this case, denote the average throughput demand achievable at the Nash equilibrium point $(p,\ldots,p)$ as $\rho'$. We have
\begin{align}\label{eq_rho'_homogeneous}
r_i(p,\ldots, p)=\rho'=&\int^{(1+\Delta)T}_T\left[ \prod_{j\neq i} \int^T_0f(x_j)dx_j \right] f({x_i})d{x_i}\notag\\
&+\int^{\infty}_{(1+\Delta)T}\left[\prod_{j\neq i}\int^{\frac{{x_i}}{1+\Delta}}_0f(x_j)dx_j \right] f({x_i})d{x_i}\notag\\
=&\left[p(1-p)^{n-1}(1-p^{\Delta})\right]\!+\!\Big[p^{1+\Delta}-{{n-1}\choose{1}}\frac{1+\Delta}{2+\Delta}p^{2+\Delta}\notag\\
   &+\cdots+(-1)^{n-1}{{n-1}\choose{n-1}}\frac{1+\Delta}{n+\Delta}p^{n+\Delta}\Big].
\end{align}

The Nash equilibrium can be analyzed as follows when perfect CSI is available to homogeneous nodes.
From (\ref{eq_rho'_homogeneous}), we have $\frac{dr_i}{dp}=(1-p)^{n-2}\left[1-np+(n-1)p^{1+\Delta}\right]$, and
\begin{align}\label{eq_CSI_Delta}
\frac{d}{dp}\left(1-np+(n-1)p^{1+\Delta}\right)=-n+(n-1)(1+\Delta)p^{\Delta}.
\end{align}
Note that $1-np+(n-1)p^{1+\Delta}=1$ if $p=0$, and $1-np+(n-1)p^{1+\Delta}=0$ if $p=1$. We consider two cases: $0\le\Delta\le\frac{1}{n-1}$ and $\Delta>\frac{1}{n-1}$ separately in the following.
\begin{itemize}
\item
Case (i) $0\le\Delta\le\frac{1}{n-1}$: From (\ref{eq_CSI_Delta}), $1-np+(n-1)p^{1+\Delta}$ is a decreasing function for $p\in[0,1]$, so $\frac{dr_i}{dp}\ge 0$ when $0\le\Delta\le\frac{1}{n-1}$. This means that $r_i$ is an increasing function in $p$, so there exists a \emph{unique} Nash equilibrium point if the throughput demand is achievable under the case $0\le\Delta\le\frac{1}{n-1}$.
\item
Case (ii) $\Delta>\frac{1}{n-1}$: $1-np+(n-1)p^{1+\Delta}$ has a minimum at $p^*=\left[\frac{n}{(n-1)(1+\Delta)}\right]^{\frac{1}{\Delta}}\in(0,1)$, and $1-np+(n-1)p^{1+\Delta}$ is decreasing in $(0,p^*)$ and increasing in $(p^*,1)$. Therefore, $1-np+(n-1)p^{1+\Delta}$ has exactly two zeros $z_1$, $z_2(=1)$ in $[0,1]$. It follows that $\frac{dr_i}{dp}\ge 0$ in $[0,z_1]$ and $\frac{dr_i}{dp}\le 0$ on $[z_1,1]$. In other words, $r_i$ increases to the maximum $\rho'_{max}$ when the transmission probability $p$ is from $p=0$ to $p=z_1$ and then decreases when $p>z_1$. This means that there exist at most two Nash equilibrium points if the throughput demand is achievable (i.e., $\rho'\le\rho'_{max}$) under the case $\Delta>\frac{1}{n-1}$.
\end{itemize}

\subsection{Braess-like paradox for power capture}\label{Subsection_Paradox_powercapture}
We present analytically a Braess-like paradox for the case with homogeneous nodes by the following theorem.
%Comparing (\ref{eq_rho_homogeneous}) and (\ref{eq_rho'_homogeneous}), we have the following theorem.
\begin{theorem}
With the same average transmission probability $p$ (or average power investment), the achievable average throughput demand of homogeneous nodes with perfect CSI is not larger than that of homogeneous nodes without CSI under the power capture model.
\end{theorem}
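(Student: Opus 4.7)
My plan is to exploit the algebraic similarity between the two explicit polynomials in $p$ that appear in (\ref{eq_rho_homogeneous}) and (\ref{eq_rho'_homogeneous}), and collapse the difference $\rho - \rho'$ into a single elementary monotonicity inequality.

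The first step is to compare the bracketed sum in (\ref{eq_rho'_homogeneous}) with the right-hand side of (\ref{eq_rho_homogeneous}) termwise: the binomial factors and the rational coefficients $\frac{1+\Delta}{k+\Delta}$ match exactly, and the only change is that each power $p^k$ is replaced by $p^{k+\Delta}$. Hence the bracketed sum equals $p^\Delta\,\rho$, which yields
\begin{equation*}
\rho' \;=\; p(1-p)^{n-1}(1-p^\Delta) \;+\; p^\Delta\rho,
\qquad
\rho - \rho' \;=\; (1-p^\Delta)\bigl(\rho - p(1-p)^{n-1}\bigr).
\end{equation*}
Since $0 \le p \le 1$ and $\Delta \ge 0$ imply $1-p^\Delta \ge 0$, the theorem reduces to the single inequality $\rho \ge p(1-p)^{n-1}$. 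This has a clean interpretation: $p(1-p)^{n-1}$ is precisely the $\Delta = \infty$ collision-model throughput, so the reduced claim just says that power capture with any finite $\Delta$ is no worse than the collision model at the same transmission probability.

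The second step is to prove $\rho \ge p(1-p)^{n-1}$ from the integral representation the paper used just above (\ref{eq_NE_rhoi}). Specializing to homogeneous nodes and Rayleigh fading, and performing the substitution $u = e^{-x/(1+\Delta)}$, I would rewrite
\begin{equation*}
\rho \;=\; p(1+\Delta)\int_0^1 (1-pu)^{n-1}\,u^\Delta\,du.
\end{equation*}
Because $(1-pu)^{n-1}$ is nonincreasing in $u$, we have $(1-pu)^{n-1} \ge (1-p)^{n-1}$ for every $u\in[0,1]$; pulling this constant lower bound out of the integral and using $(1+\Delta)\int_0^1 u^\Delta\,du = 1$ gives $\rho \ge p(1-p)^{n-1}$, closing the argument.

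No step is a serious obstacle. The only non-routine observation is the $p^\Delta$-shift between the two polynomials, which turns a clumsy alternating-sum comparison into the clean identity above and leaves only the trivial monotonicity of $(1-pu)^{n-1}$ in $u$ to do the real work. The boundary cases $\Delta = 0$ and $p \in \{0,1\}$ yield $1 - p^\Delta = 0$ and hence $\rho = \rho'$, consistent with the weak inequality in the statement.
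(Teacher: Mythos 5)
Your proposal is correct, and its first half is exactly the paper's own argument: the paper also observes the $p^{\Delta}$-shift between (\ref{eq_rho_homogeneous}) and (\ref{eq_rho'_homogeneous}) and writes $\rho'=(1-p^{\Delta})\,p(1-p)^{n-1}+p^{\Delta}\rho$, so that $\rho'$ is a convex combination of $p(1-p)^{n-1}$ and $\rho$. Where you genuinely diverge is in how the remaining inequality $\rho\ge p(1-p)^{n-1}$ is established. The paper bounds $\rho'\le\max\{p(1-p)^{n-1},\rho\}$ and then asserts that this maximum equals $\rho$ because ``$p(1-p)^{n-1}$ equals the throughput for $\Delta=\infty$ and the throughput decreases as $\Delta$ increases'' --- a monotonicity-in-$\Delta$ claim that is stated but not proved anywhere in the paper. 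You instead prove the needed inequality directly from the integral representation, rewriting
\begin{equation*}
\rho=p(1+\Delta)\int_0^1(1-pu)^{n-1}u^{\Delta}\,du
\end{equation*}
via $u=e^{-x/(1+\Delta)}$ and using $(1-pu)^{n-1}\ge(1-p)^{n-1}$ together with $(1+\Delta)\int_0^1u^{\Delta}\,du=1$. This substitution is correct (the exponential density turns into the normalized weight $(1+\Delta)u^{\Delta}$ on $[0,1]$, and $\Delta\to\infty$ concentrates it at $u=1$, recovering the collision-model value), and your factorization $\rho-\rho'=(1-p^{\Delta})\bigl(\rho-p(1-p)^{n-1}\bigr)$ makes the sign analysis cleaner than the paper's two-step max bound. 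The net effect is that your version is self-contained where the paper's leans on an unproven auxiliary fact; the paper's version, in exchange, is shorter and frames the comparison conceptually as ``finite $\Delta$ beats $\Delta=\infty$.''
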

\begin{proof}
We want to show $\rho'\le\rho$, where $\rho'$ and $\rho$ are given in (\ref{eq_rho_homogeneous}) and (\ref{eq_rho'_homogeneous}), respectively. We have
\begin{align*}
\rho'=&(1-p^{\Delta})p(1-p)^{n-1}+p^{\Delta}\rho\\
     \le&\max\{p(1-p)^{n-1},\rho \}\\
     \le&\rho,
\end{align*}
where the first inequality is deduced from the convex combination of $p(1-p)^{n-1}$ and $\rho$, and the second inequality is due to the fact that $p(1-p)^{n-1}$ equals to the throughput for $\Delta=\infty$ (when CSI is not available) and the throughput decreases as $\Delta$ increases. The proof is complete.
\end{proof}

Similarly, this theorem implies a Braess-like paradox that for the same achievable average throughput demand $\rho$, homogeneous nodes with perfect CSI have the average transmission probability $p$ (or average power investment) at the Nash equilibrium point not smaller than that of the case without CSI.

\section{Distributed Algorithms}
\label{Section_Distributed_Algorithm}

In a network, each node can usually estimate its average throughput through, for example, the acknowledgement of successful packet reception from the BS. Let $\hat{\rho}_i$ denote node $i$'s throughput estimate. We can set node $i$'s initial (i.e., at the $0$th iteration) transmission probability $p_i(0)=\rho_i$, as the transmission probability needs to be at least $\rho_i$ to fulfill the throughput demand. We provide one most common distributed mechanism converging to the Nash equilibrium point. The readers are referred to \cite{Distributed_Update_JSAC08}\cite{Sabir09}\cite{Aloha_JSAC08} for more discussions on the use of various distributed algorithms to achieve the equilibrium points in random access games.
At the ($m+1$)th iteration, each node updates its transmission probability by
\begin{align*}
p_i(m+1)=p_i(m)+\epsilon(m)\left[\min\left(1,\frac{\rho_i}{\hat{\rho}_i(m)}p_i(m)\right)-p_i(m)\right],
\end{align*}
where the step size $ \epsilon(m) > 0$. %When $\epsilon(m)=1$, we recover the best response mechanism.
Usually, $\epsilon(m)\le 1$, for example, $\epsilon(m)=\frac{1}{1+m}$ and a smaller $\epsilon(m)$ will more likely ensure the convergence to the better Nash equilibrium point (i.e., $\sum p_i\le \frac{b+1}{b}$) under the SINR capture model, and to the unique Nash equilibrium under the power capture model. However, it also takes a longer time for convergence.
When CSI is available, the threshold $T_i(m+1)$ can be uniquely determined from $p_i(m+1)$ and vice versa if the channel characteristics are known.
Note that the time between two consecutive iterations of a node does not have to be the same as that of the other nodes. The nodes can update their transmission probabilities synchronously or asynchronously.
%The best response is obtained by observing that given other nodes' transmission probabilities, the throughput of node $i$ is proportional to its transmission probability in both the power capture and the SINR capture models. Therefore, it updates its transmission probability by the factor $\rho_i/\hat{\rho}_i(m)$ at the ($m+1$)th iteration.
%Due to the throughput estimation error, the updated transmission probability could be large than 1. In this case the node simply transmits with probability $1$.

Issues such as infeasible throughput demands, and the design tradeoff of $\epsilon(m)$ between ensuring convergence and the convergence time, are beyond the scope of this paper.

We present some simulation results in Fig.~\ref{sim_noCSI} and Fig.~\ref{sim_perfectCSI}, where the channels are i.i.d. Rayleigh fading, and the reception model is SINR capture.
We consider a three-node network with throughput demands $\rho_1=0.10$, $\rho_2=0.05$, $\rho_3=0.01$, the capture ratio $b=5$, and $\frac{P_T}{N_0}=10$.
The cases with no CSI and perfect CSI are considered, and node $i$ estimates its $\hat{\rho}_i$ by the number of successfully transmitted packets divided by number of time slots that have elapsed.
For the case with no CSI, the dynamic of the transmission probabilities is plotted in Fig.~\ref{sim_noCSI}.
For the case with perfect CSI, we have the threshold $T_i(m+1)=-\ln p_i(m+1)$, and node $i$ will transmit only if its channel gain is larger than the threshold $T_i$. The dynamic of the thresholds is plotted in Fig.~\ref{sim_perfectCSI}.
Three realizations are shown for each case, and the figures illustrate that they indeed converge to the same equilibrium.
%and update its transmission probability (or threshold) each time a packet is successfully transmitted.
%or update its transmission probability (or threshold) every time slot
%This is a asynchronously update mechanism.

\section{Conclusion}\label{Section_Conclusion}
In this paper, we used a game-theoretic approach to study the Nash equilibrium point of CSI-dependent transmission probabilities for selfish random access nodes in fading channels with capture.
The analysis revealed that under the power capture and SINR capture models, there are at most two Nash equilibrium points in the feasible region of throughput demands. For the collision channel, which is a special case of both the power capture and SINR capture models, there are exactly two Nash equilibrium points within the feasible region. On the other hand, there is one unique Nash equilibrium point under the perfect power capture model when CSI is not available to selfish nodes.
Our work extends the existing works in the literature. Moreover,
we pointed out that, in some situations, performance degradation may occur when CSI is available to selfish random access nodes as compared to when CSI is not available. We called this phenomenon a Braess-like paradox. In particular, we analytically showed that for homogeneous nodes, Braess-like paradoxes occur in the power capture model and
in the SINR capture model with the capture ratio larger than one and noise to signal ratio sufficiently small.

\section*{Acknowledgment}
The authors would like to thank the anonymous reviewers and the Associate Editor for their valuable comments on how to improve the paper.
The authors would also like to thank Professor Fedja Nazarov (University of Wisconsin, Madison) for the help and discussion on the proof of \emph{Theorem \ref{NE for power capture}}.

\appendix

\subsection{\textbf{Proof of Theorem \ref{NE for power capture}}}
\label{proof_NE_for_power_capture}
We need to show that for any throughput demands $(\rho_1,\ldots,\rho_n)$ satisfying $\rho_i\geq{}0$ and $\sum^n_{i=1}\rho_i\leq{}1$, there is a \emph{unique} solution $(p_1,\ldots,p_n)$ called the Nash equilibrium point.
That is, the representation (\ref{Throughput_i_power}) is a \emph{one-to-one and onto} mapping from the unit $n$-cube ($0\leq{}p_i\leq{}1$) to the unit $n$-simplex ($\rho_i\geq{}0$ and $\sum^n_{i=1}\rho_i\leq{}1$).

In the following, we simply use $\sum_i$ to denote $\sum_{i=1}^n$ and a similar expression for $\prod_j$.
Consider the auxiliary function $G$
\begin{align}
G(p_1,\ldots,p_n)=\sum_i \rho_i\ln p_i+\int_0^1\frac{\left[\prod_j(1-p_jx)\right]-1}{x}\,dx.
\end{align}
It can be checked that the solution of the set of equations (\ref{Throughput_i_power}) is a critical point of this function. We first show that the representation (\ref{Throughput_i_power}) maps the open cube ($0<p_i<1$) one-to-one and onto the open simplex ($\rho_i>0$ and $\sum^n_{i=1}\rho_i<1$), and then deal with the boundary.

Let $p_i=e^{t_i}$. We will show that $G$ is a strictly concave function of variables $t_i$ as long as all $t_i<0$ (correspondingly, $0<p_i<1$).
%Indeed, its second differential
%\begin{align*}
%D^2F(h)=\sum_{i,j}\frac{\partial^2F}{\partial t_i\partial t_j}h_ih_j
%\end{align*}
This is equivalent to showing that the Hessian of $G$ is positive semi-definite (or nonnegative definite), that is,
$\sum_{i,j}\frac{\partial^2G}{\partial t_i\partial t_j}h_ih_j< 0$.
Let $\Phi(x) = \prod_j(1 - p_j x)$ and $\psi_i(x) = \frac {p_i}{1 - p_ix}$, we have the following equation:
\begin{align}
&\sum_{i,j}\frac{\partial^2G}{\partial t_i\partial t_j}h_ih_j\notag\\
=&\sum_{i\neq j}\frac{\partial^2G}{\partial t_i\partial t_j}h_ih_j+\sum_i\frac{\partial^2G}{\partial t^2_i}h^2_i\notag\\
=&\int_0^1 \left[\left(\sum_i h_i\psi_i(x)\right)^2-\sum_ih^2_i\psi_i^2(x) \right]x\Phi(x) \,dx \notag\\
&+ \sum_i\int_0^1 \left(-h_i^2 \psi_i(x)\Phi(x)\right)\,dx \notag\\
=&\int_0^1 \left[\sum_i h_i\psi_i(x) \right]^2x\Phi(x) \,dx \notag\\
&- \int_0^1\sum_i h_i^2 (\psi_i(x) + x\psi_i^2(x))\Phi(x)\,dx.
\end{align}
By Cauchy-Schwarz inequality, we have
\begin{align}\label{Cauchy-Schwarz}
\left[\sum_i h_i\psi_i(x) \right]^2\le \left[\sum_i h_i^2\psi_i(x) \right]\cdot \left[\sum_i \psi_i(x) \right].
\end{align}
Hence, all we need is to show that
\begin{align*}
\sum_jh^2_j \int_0^1 \psi_j(x) \left[\sum_i \psi_i(x) \right]x\Phi(x) \,dx \\
- \sum_j h_j^2\int_0^1 \left( \psi_j(x) + x\psi_j^2(x) \right) \Phi\,dx < 0,
\end{align*}
which is true if for all $j$ we have
\begin{align*}
\int_0^1\! \psi_j(x)\! \left[\sum_i \psi_i(x) \right]\!x\Phi(x) \,dx\!\le\!
\int_0^1\! \left( \psi_j(x)\!+\!x\psi_j^2(x)\right) \Phi(x)\,dx,
\end{align*}
or, equivalently,
\begin{align}\label{eq_thm4_final}
\int_0^1 \psi_j(x) \left[\sum_{i\ne j} \psi_i(x) \right]x\Phi(x) \,dx\le \int_0^1 \psi_j(x) \Phi(x)\,dx.
\end{align}
Let $\Phi_j(x)=\prod_{i\ne j}(1-p_ix)$ and $\Phi_j'(x)=\frac{d}{dx}\Phi_j(x)$. The inequality (\ref{eq_thm4_final}) can be rewritten as
\begin{align*}
&p_j\int_0^1 -x\Phi_j'(x)\,dx\le p_j\int_0^1\Phi_j(x)\,dx \\
\Leftrightarrow~
&p_j\int_0^1 x\Phi_j'(x)+\Phi_j(x)\,dx=p_j\left[x\Phi_j(x)\right]^1_{0}=p_j\Phi_j(1)> 0,
\end{align*}
which is obvious.

Hence, we know that the function $G$ is strictly concave in $t_i$ when all $t_i < 0$.
It follows that there is at most one critical point in the open unit cube and that point is the point of maximum.
This takes care of the one-to-one part.

As for the onto part, we only need to show that for $(\rho_1,\ldots,\rho_n)$ satisfying $\rho_i>0$ and $\sum^n_{i=1}\rho_i<1$, the maximum is attained inside the cube, not on the boundary (i.e., some $p_i=1$).

Note that the maximum cannot be attained with any $p_i=0$ (then $G=-\infty$).
Hence, $\frac{\partial G}{\partial p_i}\geq 0$ at the point of maximum whether it is on the boundary or not (otherwise slightly shifting the point to the left will result in a bigger value). Thus,
\begin{align*}
&\sum_i \frac{\partial G}{\partial p_i}\geq 0 \\
\Rightarrow&\sum^n_{i=1}\rho_i\geq 1-\prod^n_{i=1}(1-p_i),
\end{align*}
where we have use the equality (\ref{Sum_rho}). Since the left hand side is smaller than $1$, we cannot have any $p_i=1$ at the point of maximum, and thereby we have proved that the open cube is one-to-one and onto mapped to the open simplex.

Now, we look at the boundary issue. The representation (\ref{Throughput_i_power}) is a continuous function, and the unit cube is compact, so the image has to be compact. The onto claim is done.

It is clear that if $\rho_i=0$, we must have $p_i=0$. So, removing all zeroes, we can reduce the problem to itself with fewer variables. That is, we only need to consider all $\rho_i>0$. Note that in (\ref{Cauchy-Schwarz}) we have the equality only when all $h_i$ are the same (the only direction in which we may lack strict concavity). Hence, if there are two critical points, both points cannot be on the boundary simultaneously, and one has to be inside the cube. Once one of them is inside the cube, we immediately get $\sum^n_{i=1}\rho_i < 1$, no critical point on the boundary at all then. In conclusion, we have at most one critical point. The one-to-one claim is done, and the proof is complete.

% references section
% can use a bibliography generated by BibTeX as a .bbl file
% BibTeX documentation can be easily obtained at:
% http://www.ctan.org/tex-archive/biblio/bibtex/contrib/doc/
% The IEEEtran BibTeX style support page is at:
% http://www.michaelshell.org/tex/ieeetran/bibtex/
%\bibliographystyle{IEEEtran}
% argument is your BibTeX string definitions and bibliography database(s)
%\bibliography{IEEEabrv,../bib/paper}

\begin{thebibliography}{99}

\bibitem{Aloha70}
N.~Abramson, %
``The ALOHA system -- another alternative for computer communications,'' %
in \emph{Proc. AFIPS Conf.}\/, 1970.


\bibitem{SlottedAloha72}
L.~G.~Roberts, %
``ALOHA packet system with and without slots and capture,'' %
Stanford Res. Inst., Advanced Research Projects Agency, Network Information Center, Stanford, Tech. Rep. ASS Note 8, 1972.

%
%\bibitem{DataNetwork92}
%D.~Bertsekas and R.~Gallager, %
%\emph{Data Networks}\/,
%2nd ed. Englewood Cliffs, NJ: Prentice Hall, 1992.
%
%\bibitem{StabilityAloha79}
%B.~Tsybakov and W.~Mikhailov, %
%``Ergodicity of Slotted Aloha System,'' %
%\emph{Probl. Pered. Inform}, vol.~15, pp.~73--87, 1979.
%
\bibitem{Zorzi_95}
M.~Zorzi, %
``Mobile radio slotted ALOHA with capture and diversity,'' %
\emph{Wireless Networks}\/, vol.~1, no.~2 pp.~227--239, 1995.
%
%\bibitem{Sant_Sharma00}
%J.~Sant and V.~Sharma, %
%``Performance analysis of a slotted-Aloha protocol on a capture channel with fading,'' %
%\emph{Queuing Syst.: Theory adn Appl.}\/, vol.~34, no.~1-4, pp.~1-35, 2000.
%
%
%\bibitem{Luo&Ephremides_conjecture_06}
%J.~Luo and A.~Ephremides, %
%``On the throughput, capacity, and stability regions of random multiple access,'' %
%\emph{IEEE Trans. Inf. Theory}\/, vol.~52, no.~6, pp.~2593--2607, June 2006.

\bibitem{Aloha_Selfish01}
A.~B.~MacKenzie and S.~B.~Wicker, %
``Selfish users in Aloha: a game-theoretic approach,'' %
in \emph{Proc. IEEE VTC}\/, Fall 2001.

%\bibitem{Altman04}
%E.~Altman, R.~El-Azouzi, and T.~Jim\'{e}nez %
%``Slotted Aloha as a game with partial information,''
%\emph{Computer Networks}\/, vol.~45, pp.~701--713, 2004.

%\bibitem{MPR01}
%L.~Tong, Q.~Zhao, and G.~Mergen, %
%``Multipacket reception in random access wireless networks: from signal processing to optimal medium access control,'' %
%\emph{IEEE Commun. Mag.}\/, vol.~39, pp.~108--112, Nov. 2001.

\bibitem{GameTheory94}
M.~J.~Osborne and A.~Rubinstein, %
\emph{A Course in Game Theory}\/,
The MIT Press, 1994.

\bibitem{Aloha_Selfish08}
H.~Inaltekin and S.~B.~Wicker, %
``The analysis of nash equilibria of the one-shot random-access game for wireless networks
and the behavior of selfish nodes,'' %
\emph{IEEE/ACM Trans. Netw.}\/, vol.~16, no.~5, pp.~1094--1107, Oct. 2008.

\bibitem{GameTheory_and_the_Design}
A.~B.~MacKenzie and S.~B.~Wicker, %
``Game theory and the design of self-configuring, adaptive wireless networks,'' %
\emph{IEEE Commun. Mag.}\/, pp.~126--131, Nov. 2001.

\bibitem{Distributed_Update_JSAC08}
T.~Cui, L.~Chen, and S.~H.~Low, %
``A game-theoretic framework for medium access control,'' %
\emph{IEEE J. Sel. Areas Commun.}\/, vol.~26, no.~7, pp.~1116--1127, Sept. 2008.

\bibitem{Reverse_engineering_MAC07}
J.-W.~Lee, A.~Tang, J.~Huang, M.~Chiang, and A.~R.~Calderbank, %
``Reverse-engineering MAC: a non-cooperative game model,'' %
\emph{IEEE J. Sel. Areas Commun.}\/, vol.~25, no.~6, pp.~1135--1147, Aug. 2007.

%\bibitem{Braess_ger}
%D.~Braess, %
%``\"{U}ber e\.{i}n Paradoxon aus der Verkehrsplanung,'' %
%\emph{Unternehmensforschung}\/, vol.~12, pp.~258--268, 1968.

\bibitem{Braess_ger}
D.~Braess, A.~Nagurney, and T.~Wakolbinger, %
``On a paradox of traffic planning''
\emph{Transp. Sci.}\/, vol.~39, no.~4, pp.~446--450, Nov. 2005.

\bibitem{Jin&Kesidis02}
Y.~Jin and G.~Kesidis %
``Equilibria of a noncooperative game for heterogeneous users of an Aloha network,''
\emph{IEEE Comm. Letters}\/, vol.~6, no.~7, pp.~282--284, 2002.

\bibitem{Aloha_JSAC08}
I.~Menache and N.~Shimkin, %
``Rate-based equilibria in collision channels with fading,'' %
\emph{IEEE J. Sel. Areas Commun.}\/, vol.~26, no.~7, pp.~1070--1077, Sept. 2008.

%\bibitem{Qin&Berry_03}
%X.~Qin and R.~Berry, %
%``Exploiting multiuser diversity for medium access control in wireless networks,'' %
%in \emph{Proc. IEEE INFOCOM}\/, 2003.
%
%\bibitem{Adireddy_Tong05}
%S.~Adireddy and L.~Tong, %
%``Exploiting decentralized channel state information for random access,'' %
%\emph{IEEE Trans. Inf. Theory}\/, vol.~51, no.~2, pp.~537--561, Feb. 2005.

\bibitem{Aloha_INFOCOM08}
I.~Menache and N.~Shimkin, %
``Decentralized rate regulation in random access channels,'' %
in \emph{Proc. IEEE INFOCOM}\/, 2008.%

\bibitem{Sabir09}
E.~Sabir, R.~El-Azouzi, V.~Kavitha, Y.~Hayel, and E.-H.~Bouyakhf, %
``Stochastic learning solution for constrained Nash equilibrium throughput in non saturated wireless collision channels,'' %
in \emph{Proc. ICST/ACM GameComm} (in conjunction with ValueTools), 2009.

\bibitem{NE_Threshold_TSP07}
M.~H.~Ngo and V.~Krishnamurthy, %
``Game theoretic cross-layer transmission policies in multipacket reception wireless networks,'' %
\emph{IEEE Trans. Signal Process.}\/, vol.~55, no.~5, pp.~1911--1926, May 2007.

\bibitem{StabilityMPR88}
S.~Ghez, S.~Verd\'{u}, and S.~Schwartz, %
``Stability properties of slotted-Aloha with multi-packet reception capability,'' %
\emph{IEEE Trans. Automat. Contr.}\/, vol.~33, no.~7, pp.~640--649, July 1988.

\bibitem{MPR05}
V.~Naware, G.~Mergen, and L.~Tong, %
``Stability and delay of finite-user slotted ALOHA with multipacket reception,'' %
\emph{IEEE Trans. Inf. Theory}\/, vol.~51, no.~7, pp.~2636--2656, July 2005.


\bibitem{Sun&Modiano05}
J.~Sun and E.~Modiano, %
``Opportunistic power allocation for fading channels with non-cooperative users and random access,'' %
in \emph{Proc. BroadNets}\/, 2005.

\bibitem{Aloha_SIGMETRICS08}
I.~Menache and N.~Shimkin, %
``Noncooperative power control and transmission scheduling in wireless collision channels,'' %
in \emph{Proc. ACM SIGMETRICS}\/, 2008.%

\bibitem{Gupta_Kumar00}
P.~Gupta and P.~R.~Kumar, %
``The Capacity of Wireless Networks,'' %
\emph{IEEE Trans. Inf. Theory}\/, vol.~46, no.~2, pp.~388--404, Mar. 2000.


%----------------Braess-like paradox -------------------------------

\bibitem{Cohen_Jeffries97}
J.~E.~Cohen and C.~Jeffries, %
``Congestion resulting from increased capacity in single-server queueing networks,'' %
\emph{IEEE/ACM Trans. Netw.}\/, vol.~5, no.~2, pp.~305--310, Apr. 1997.

\bibitem{Braessparadox_computernetwork99}
Y.~A.~Korilis, A.~A.~Lazar, and A.~Orda, %
``Avoiding the Braess paradox in non-cooperative networks,'' %
\emph{J. Appl. Prob.}\/, vol.~36, pp.~211--222, 1999.

\bibitem{Braessparadox_computernetwork00}
H.~Kameda, E.~Altman, T.~Kozawa, and Y.~Hosokawa, %
``Braess-like paradoxes in distributed computer systems,'' %
\emph{IEEE Trans. Automat. Contr.}\/, vol.~45, no.~9, pp.~1687--1691, Sept. 2000.

\bibitem{Braessparadox_computernetwork02}
H.~Kameda and O.~Pourtallier, %
``Paradoxes in distributed decisions on optimal load balancing for networks of homogeneous computers,'' %
\emph{J. ACM}\/, vol.~49, No.~3, pp.~407--433, May 2002.

\bibitem{Braessparadox_computernetwork06}
A.~Tang, J.~Wang, and S.~H.~Low, %
``Counter-intuitive throughput behaviors in networks under end-to-end control,'' %
\emph{IEEE/ACM Trans. Netw.}\/, vol.~14, no.~2, pp.~355--368, Apr. 2006.

\bibitem{Altman_Braess08}
E.~Altman, V.~Kamble, and H.~Kameda, %
``A Braess type paradox in power control over interference channels,'' %
in \emph{Proc. WiOpt}\/, 2008.
%------------------------------------------------------------------------


\bibitem{Aloha_Selfish03}
A.~B.~MacKenzie and S.~B.~Wicker, %
``Stability of multipacket slotted Aloha with selfish users and perfect information,'' %
in \emph{Proc. IEEE INFOCOM}\/, 2003.%


\bibitem{Proakis_COM}
J.~G.~Proakis, %
\emph{Digital Communications}\/,
McGraw-Hill Company, 4th ed., 2001.

\bibitem{Goodman00}
D.~J.~Goodman and N.~B.~Mandayam, %
``Power control for wireless data,'' %
\emph{IEEE Personal Commun. Mag.}\/,  vol.~7, no.~2, pp.~48--54, Apr. 2000.

\bibitem{Dua_TWC08}
A.~Dua, %
``Random access with multi-packet reception,'' %
\emph{IEEE Trans. Wireless Comm.}\/, vol.~7, no.~6, pp.~2280--2288, June 2008.



\end{thebibliography}
%
% <OR> manually copy in the resultant .bbl file
% set second argument of \begin to the number of references
% (used to reserve space for the reference number labels box)

\newpage
\begin{figure}[!t]
\centering
\includegraphics[width=0.7\textwidth]{}
\caption{$n$ homogeneous nodes with perfect CSI and no CSI at the Nash equilibrium point under the SINR capture model with the capture ratio $b=5$ and $N_0/P_T \rightarrow 0$.}
\label{Capture5_SNRinf}
\end{figure}

\begin{figure}[!t]
\centering
\includegraphics[width=0.7\textwidth]{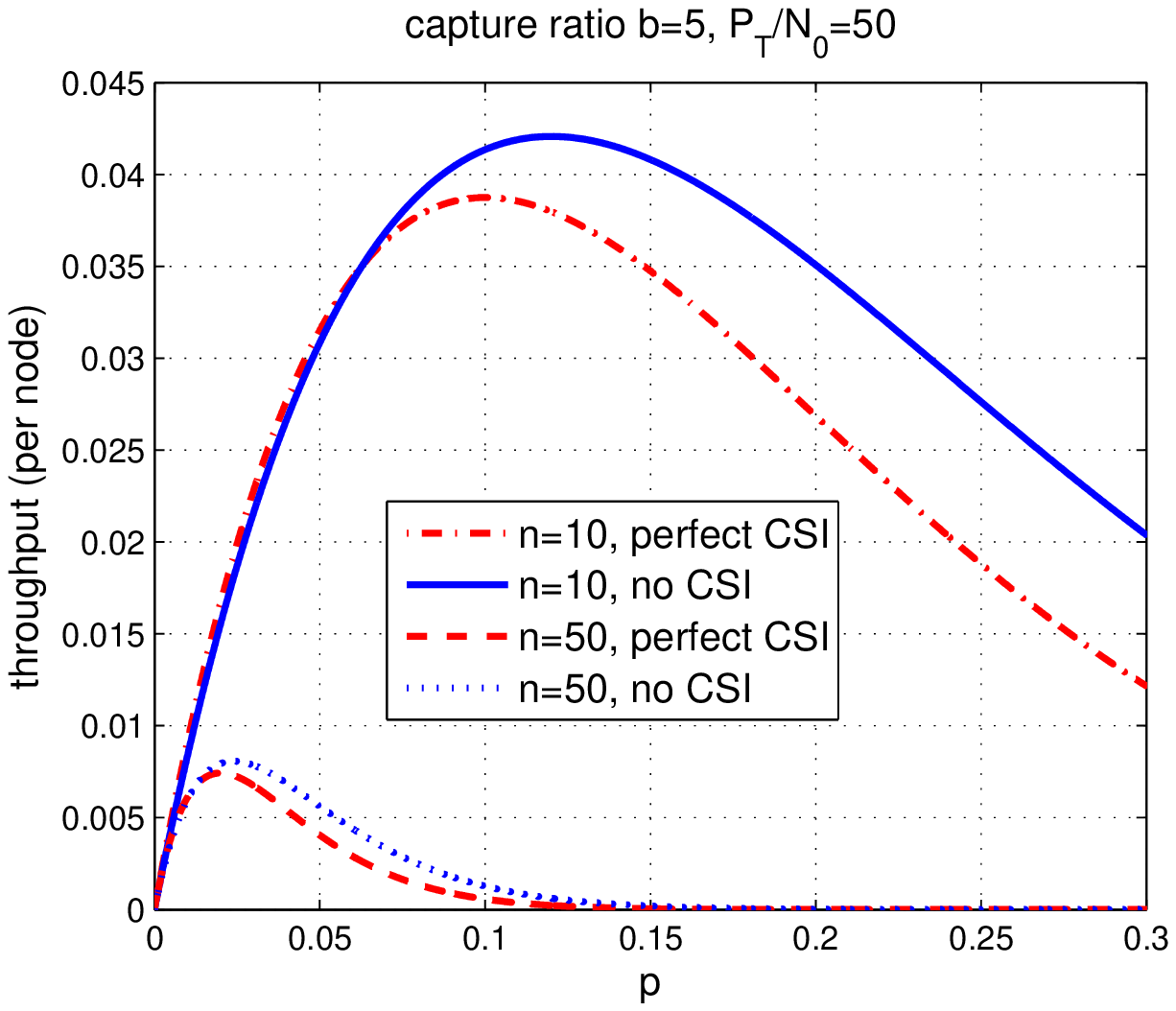}
\caption{$n$ homogeneous nodes with perfect CSI and no CSI at the Nash equilibrium point under the SINR capture model with the capture ratio $b=5$ and $P_T/N_0=50$.}
\label{Capture5_SNR50}
\end{figure}

\begin{figure}[!t]
\centering
\includegraphics[width=0.7\textwidth]{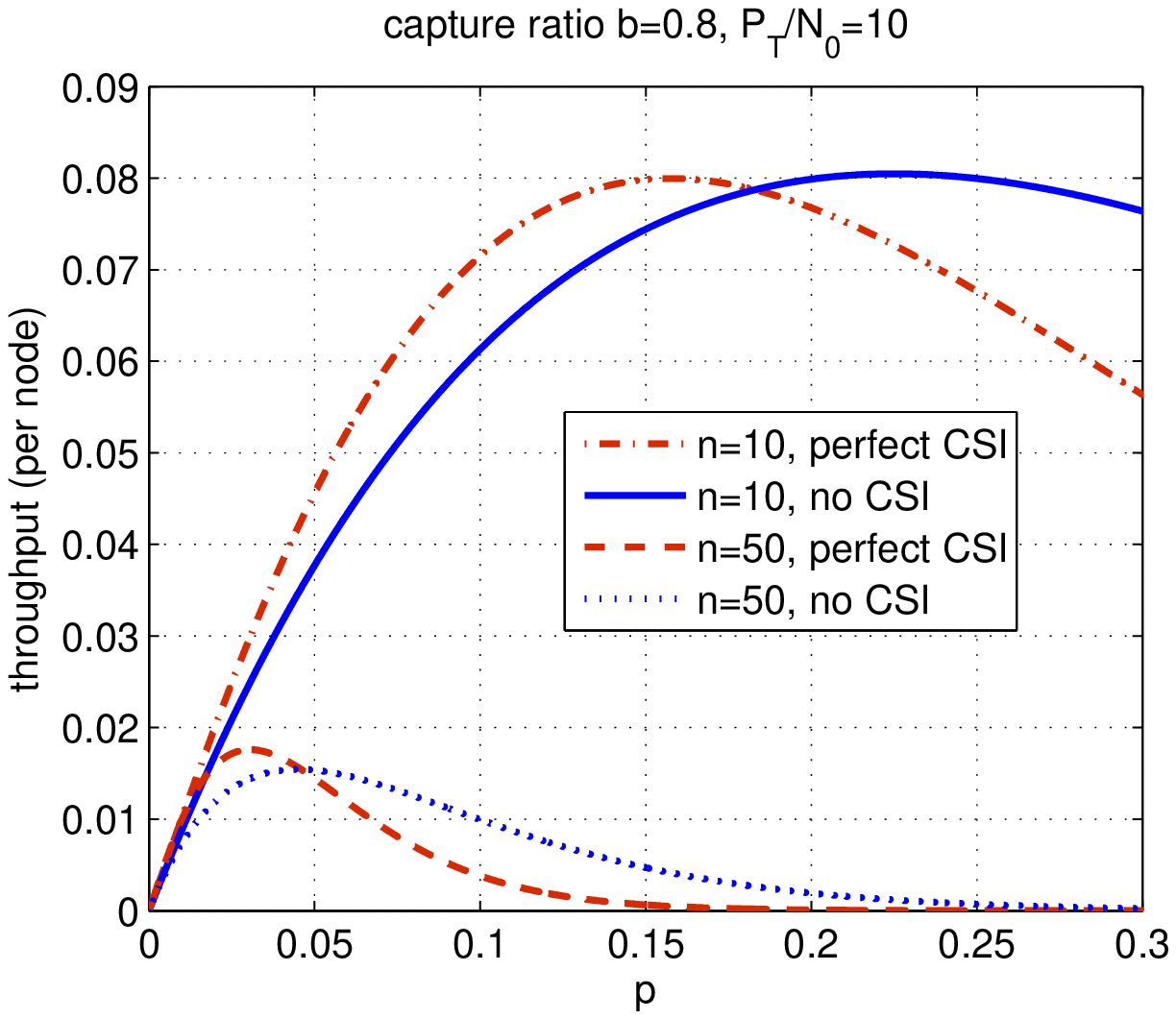}
\caption{$n$ homogeneous nodes with perfect CSI and no CSI at the Nash equilibrium point under the SINR capture model with the capture ratio $b=0.8$ and $P_T/N_0=10$.}
\label{Capture0.8_SNR10}
\end{figure}

\begin{figure}[!t]
\centering
\includegraphics[width=0.75\textwidth]{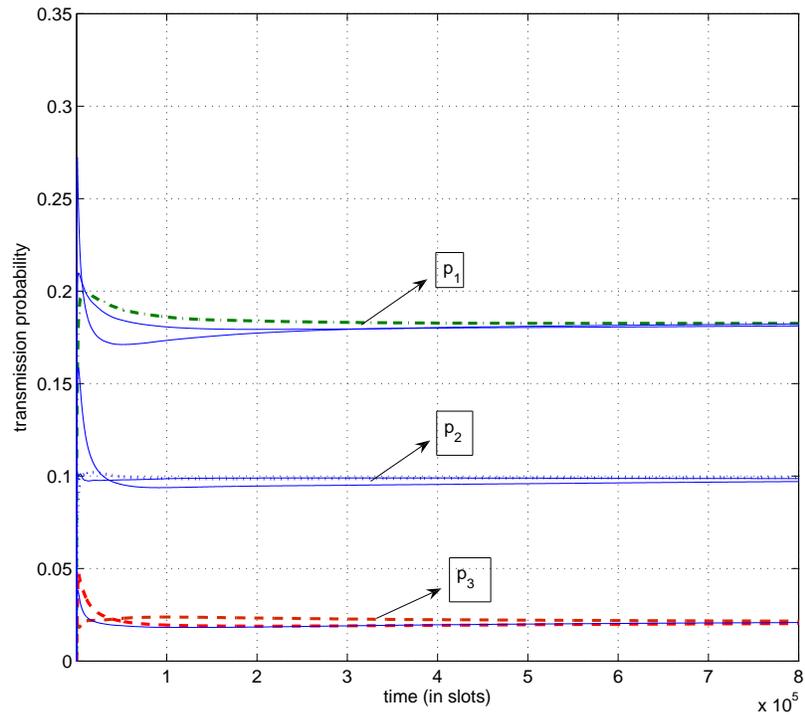}
\caption{Dynamic of transmission probabilities in a three-node network with no CSI under the SINR capture model. The capture ratio $b=5$, $\frac{P_T}{N_0}=10$, and the throughput demands are $\rho_1=0.10$, $\rho_2=0.05$, $\rho_3=0.01$. }
\label{sim_noCSI}
\end{figure}

\begin{figure}[!t]
\centering
\includegraphics[width=0.75\textwidth]{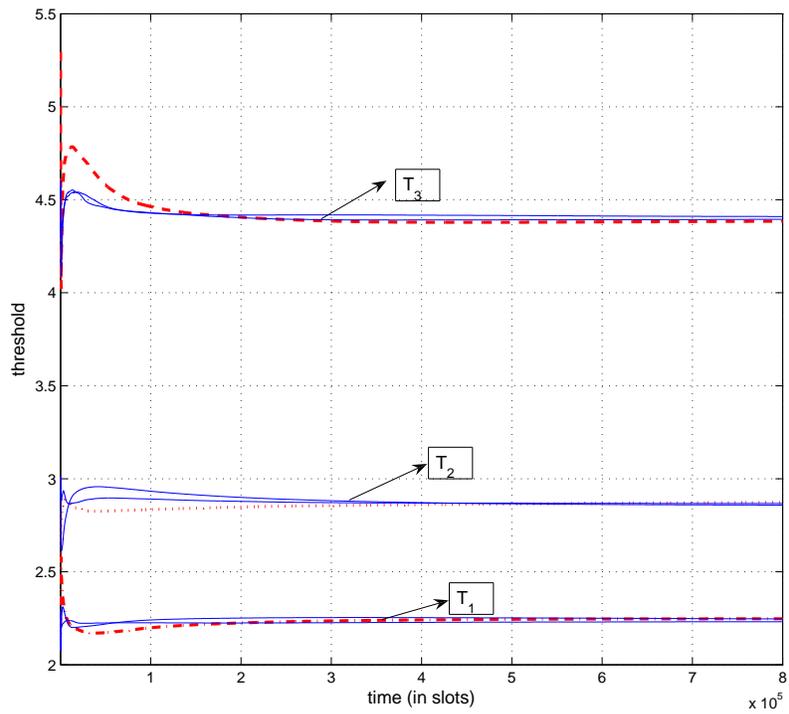}
\caption{Dynamic of thresholds in a three-node network with perfect CSI under the SINR capture model. The capture ratio $b=5$, $\frac{P_T}{N_0}=10$, and the throughput demands are $\rho_1=0.10$, $\rho_2=0.05$, $\rho_3=0.01$. }
\label{sim_perfectCSI}
\end{figure}

\end{document}